\newtheorem{theorem}{Theorem}
\newtheorem{lemma}{Lemma}
\newtheorem{corollary}{Corollary}
\theoremstyle{definition}
\newtheorem{example}{Example}
\newtheorem{definition}{Definition}
\renewcommand*\env@matrix[1][c]{\hskip -\arraycolsep
  \let\@ifnextchar\new@ifnextchar
  \array{*\c@MaxMatrixCols #1}}
\newcommand{\Fb}{\mathbbmss{F}}
\newcommand{\p}{\pmb}
\newcommand{\xb}{\pmb{x}} 
\newcommand{\cb}{\pmb{c}}
\newcommand{\Ac}{\mathcal{A}} 
\newcommand{\Vc}{\mathcal{V}} 
\newcommand{\Ec}{\mathcal{E}}
\newcommand{\Gcpu}{\bar{G}_u} 
\newcommand{\Ecpu}{\bar{\mathcal{E}}_u} 
\newcommand{\Ef}{\mathfrak{E}}
\newcommand{\Df}{\mathfrak{D}}
\newcommand{\betaopt}{{\beta}_{\sf opt}}
\newcommand{\wt}{{\sf wt}}
\title{On Locally Decodable Index Codes}
\author{%
  \IEEEauthorblockN{Lakshmi Natarajan}
  \IEEEauthorblockA{Department of Electrical Engineering\\
                    Indian Institute of Technology Hyderabad\\
                    Sangareddy 502\,285, India\\
                    Email: lakshminatarajan@iith.ac.in}
  \and
  \IEEEauthorblockN{Prasad Krishnan and V.\ Lalitha}
  \IEEEauthorblockA{Signal Processing \& Communications Research Center\\
                    International Institute of Information Technology Hyderabad\\ 
                    Hyderabad 500\,032, India\\
                    Email: \{prasad.krishnan, lalitha.v\}@iiit.ac.in}
}
\begin{document}

\maketitle

\begin{abstract}
Index coding achieves bandwidth savings by jointly encoding the messages demanded by all the clients in a broadcast channel. The encoding is performed in such a way that each client can retrieve its demanded message from its side information and the broadcast codeword. In general, in order to decode its demanded message symbol, a receiver may have to observe the entire transmitted codeword. 
Querying or downloading the codeword symbols might involve costs to a client -- such as network utilization costs and storage requirements for the queried symbols to perform decoding. 
In traditional index coding solutions, this client aware perspective is not considered during code design. As a result, for these codes, the number of codeword symbols queried by a client per decoded message symbol, which we refer to as `locality', could be large. 
In this paper, considering locality as a cost parameter, we view index coding as a trade-off between the achievable broadcast rate (codeword length normalized by the message length) and locality, where the objective is to minimize the broadcast rate for a given value of locality and vice versa. 
We show that the smallest possible locality for any index coding problem is 1, and that the optimal index coding solution with locality 1 is the coding scheme based on fractional coloring of the interference graph. 
We propose index coding schemes with small locality by covering the side information graph using acyclic subgraphs and subgraphs with small minrank. We also show how locality can be accounted for in conventional partition multicast and cycle covering solutions to index coding. 
Finally, applying these new techniques, we characterize the locality-broadcast rate trade-off of the index coding problem whose side information graph is the directed 3-cycle.
\end{abstract}

\section{Introduction} 
The fundamental communication problem in broadcast channels is to design an efficient coding scheme to satisfy the demands of multiple clients with minimal use of the shared communication medium.
Remarkable savings in the broadcast channel use is possible if clients or receivers have prior information stored in their caches that are demanded by other users in the network. 
This could happen, for instance, when the clients are allowed to listen to prior transmissions from the server. 
For such broadcast channels with \textsl{side-information}, Index Coding was proposed by Birk and Kol in \cite{BiK_INFOCOM_98}. 
The idea of Index Coding is to broadcast a coded version of information symbols from the server so that all the receivers can simultaneously decode their demands from the broadcast codeword and their individual side-information symbols. 

Of the several classes of index coding problems discussed in the literature since~\cite{BiK_INFOCOM_98}, the most widely studied is \textit{unicast index coding}, in which each message symbol available at the server is demanded by a unique client.
The side information configuration of a unicast index coding problem is often represented using a directed graph called the \emph{side-information graph} of the problem. 
Given such a side-information graph, the goal of index coding is to design optimal index codes in terms of minimizing the channel usage. 
In \cite{YBJK_IEEE_IT_11}, the authors connected the optimal (scalar) linear index coding problem to finding a quantity called \textit{minrank} of the {side-information graph}. 
This minrank problem is known to be NP-hard in general~\cite{Pee_96}, but several approaches have been taken to address this problem, most popularly via graph theoretic ideas, to bound the optimal index coding rate from above and below; see, for example, \cite{BiK_INFOCOM_98,YBJK_IEEE_IT_11,ALSWH_FOCS_08,BKL_arxiv_10,CASL_ISIT_11,TDN_ISIT_12,BKL_IT_13,NTZ_IT_13,SDL_ISIT_13,SDL_ISIT_14,ArK_ISIT_14,MCJ_IT_14,AgM_arXiv_16}. 
The techniques used in these works naturally lead to constructions of (scalar and vector) linear index codes. 

Most of the known constructions in index coding literature assume that the clients will download all the symbols in the transmitted codeword in order to decode their demanded symbol. 
In practice, this could be prohibitive, since the length of the codeword can be much larger than that of the message demanded at a receiver, especially when the number of users is large and their demands are varied. We may however expect this situation to be the norm in our current and future wireless networks. 
In light of this, it may be appropriate to use
index codes in which each client can decode its demand by accessing only a subset of the codeword symbols. 
Using the terminology from~\cite{HaL_ISIT_12}, we refer to such codes as \emph{locally decodable index codes}.
Designing a locally decodable index code can be thought of as a `client aware' approach to the broadcast problem that takes into account the overhead incurred by the clients while participating in the communication protocol, while conventional index coding (without locality considerations) is more `channel centric' with its emphasis purely on minimizing the number of channel uses. 
Locally decodable index codes have been discussed briefly in~\cite{HaL_ISIT_12} and in the context of privacy in~\cite{KSCF_arXiv_17}. However, a fundamental treatment of the same is not available in the literature to the best of our knowledge. 

\subsection{Contributions and Organization}

In this work, we present a formal structure to the discussion regarding locally decodable index codes and present several constructions of such codes along with their locality parameters for the class of unicast index coding. 
We first define the \emph{locality} of an index code as the number of codeword symbols queried by a receiver for each demanded information symbol, and pose the index coding problem as that of minimizing the broadcast rate (ratio of codeword length to message length) for a given desired value of locality (Section~\ref{sec:defn}). 
We then show that the minimum locality of any unicast index coding problem is $1$, and the optimum index code for this value of locality is the code derived from the optimum fractional coloring of the \emph{interference graph} of the given problem (Section~\ref{sec:fractional_coloring}). 
We then provide several constructions of locally decodable index codes by covering the side-information graph using acyclic subgraphs and subgraphs of small minrank. We also show how the traditional partition multicast~\cite{BiK_INFOCOM_98,TDN_ISIT_12} and cycle covering~\cite{NTZ_IT_13,CASL_ISIT_11} solutions to index coding can be modified to yield locally decodable index codes (Section~\ref{sec:design}).
Using these coding techniques and information-theoretic inequalities we derive the exact trade-off between locality and broadcast rate of the $3$-user unicast index coding problem whose side information graph is a directed cycle (Section~\ref{sec:3cycle}).
Finally in Section~\ref{sec:conclusion} we discuss the relation of locally decodable index codes with the problems of sparse representation of vectors and privacy-preserving index codes and conclude the paper.

{\em Notation:} For any positive integer $N$, we will denote the set $\{1,\dots,N\}$ by $[N]$. Row vectors will be denoted by bold lower case letters, for example $\xb$. For any length $N$ vector \mbox{$\xb=(x_1,\dots,x_N)$} and a set $R \subset [N]$, we define $\xb_R$ to be the sub-vector $(x_j, j \in R)$. For a set of vectors $\xb_1,\dots,\xb_N$ and a subset $K \subset [N]$, we define $\xb_K$ to be the vector $(\xb_j, j \in K)$. The empty set is denoted by $\varnothing$. For a matrix $\p{A}$, the component in $j^{\text{th}}$ row and $i^{\text{th}}$ column is denoted as $\p{A}_{j,i}$.

\section{Definitions and Preliminaries} \label{sec:defn}

The index coding problem~\cite{BiK_INFOCOM_98,YBJK_IEEE_IT_11} consists of a single transmitter jointly encoding $N$ independent messages $\xb_1,\dots,\xb_N$ to broadcast a codeword $\cb$ to multiple receivers through a noiseless broadcast channel. We consider the family of \emph{unicast index coding} problems, where each message $\xb_i$ is desired at exactly one of the receivers, denoted as $(i,K_i)$, where \mbox{$K_i \subset [N]$} is the set of indices of the messages already known to the receiver as side information.
Without loss of generality, we assume that $i \notin K_i$.
The {\em side information graph} of this index coding problem is the directed graph $G=(\Vc,\Ec)$, where the vertex set \mbox{$\Vc=[N]$} and the edge set $\Ec=\{(i,j)\,|\, \text{for all }j \in K_i, i \in [N]\}$. Throughout this paper we will consider only unicast index coding problems and denote a problem by its side information graph $G$.

We assume that the messages $\xb_1,\dots,\xb_N$ are vectors of length $m$ over a finite alphabet $\mathcal{A}$, with $|\mathcal{A}| > 1$, and the codeword $\cb$ is a vector of length $\ell$ over the same alphabet, i.e.,
$\xb_i \in \mathcal{A}^m$, $i \in [N]$, and $\cb \in \mathcal{A}^\ell$.
We will assume that the alphabet $\mathcal{A}$ is arbitrary but fixed for a given index coding problem. 

The transmitted codeword $\cb=(c_1,c_2,\dots,c_\ell)$ is a function of the messages $\cb=\Ef(\xb_1,\dots,\xb_N)$, where $\Ef: \mathcal{A}^{mN} \to \mathcal{A}^{\ell}$ denotes the encoder.
Instead of observing the entire codeword $\cb$, the $i^{\text{th}}$ receiver observes only a sub-vector \mbox{$\cb_{R_i} = (c_j, j \in R_i)$}, where $R_i \subset [\ell]$, and the receiver desires to estimate the message $\xb_i$ using $\cb_{R_i}$ and its side information $\xb_j$, $j \in K_i$. The decoder at the $i^{\text{th}}$ receiver is a function
\begin{equation*}
\Df_i: \mathcal{A}^{|R_i|} \times \mathcal{A}^{m|K_i|} \to \mathcal{A}^{m},
\end{equation*} 
and the estimate of the message $\xb_i$ at the $i^{\text{th}}$ receiver $(i,K_i)$ is \mbox{$\Df_i(\cb_{R_i},(\xb_j,j \in K_i))$}. 
The tuple $(\Ef,\Df_1,\dots,\Df_N)$ of encoding and decoding functions denotes a {\em valid index code} for the index coding problem represented by the side information graph $G=(\Vc,\Ec)$ if 
$\Df_i(\cb_{R_i},(\xb_j,j \in K_i)) = \xb_i$ 
for all $i \in [N]$ and all $(\xb_1,\dots,\xb_N) \in \mathcal{A}^{mN}$, where $\cb=\Ef(\xb_1,\dots,\xb_N)$.

The receiver $(i,K_i)$ decodes $\xb_i$ using the sub-vector $\cb_{R_i}$, and hence, $|R_i|$ is the number of symbols queried or downloaded by this receiver when using the index code $(\Ef,\Df_1,\dots,\Df_N)$.
For fairness, we normalize $|R_i|$ by the number of symbols $m$ in the desired message $\xb_i$ to define the {\em locality $r_i$ of the $i^{\text{th}}$ receiver} as 
$r_i = {|R_i|}/{m}$, for $i \in [N]$.

\begin{definition}
The {\em locality} or the {\em overall locality} of the index code $(\Ef,\Df_1,\dots,\Df_N)$ 
\begin{equation} \label{eq:defn_r}
 r = \max_{i \in [N]} \, r_i = \max_{i \in [N]} \, \frac{|R_i|}{m}
\end{equation} 
is the maximum number of coded symbols queried by any of the $N$ receivers per decoded information symbol.
\end{definition}

The {\em broadcast rate} of this index code is \mbox{$\beta=\ell/m$}, and measures the bandwidth or time required by the source to broadcast the coded symbols to all the receivers. 

Without loss of generality we will consider only those index codes for which $R_1 \cup \cdots \cup R_N = [\ell]$ since any codeword symbol $c_j$ which is not utilized at any of the receivers, i.e., $c_j$ for $j \in [\ell] \setminus (R_1 \cup \cdots \cup R_N)$, need not be generated or transmitted by the encoder.

For a given locality $r$, it is desirable to use an index code with as small a value of $\beta$ as possible and vice versa, which leads us to the following definition.

\begin{definition}
Given a unicast index coding problem $G$, the {\em optimal broadcast rate function} $\beta_{G}^*(r)$ is the infimum of the broadcast rates among all message lengths \mbox{$m \geq 1$} and all valid index codes with locality at the most $r$. 
\end{definition}

The function $\beta_G^*(r)$ captures the trade-off between the reduction in the number of channel uses possible through coding and the number of codeword symbols that a receiver has to query to decode each message symbol. 

We will rely on information-theoretic inequalities to obtain bounds on $\beta_G^*(r)$, and to do so we will assume that the messages $\xb_1,\dots,\xb_N$ are random, independent of each other and are uniformly distributed in $\mathcal{A}^m$. The logarithms used in measuring mutual information and entropy will be calculated to the base $|\mathcal{A}|$. For example, the entropy of $\xb_i$ is $H(\xb_i)=m$ since $\xb_i$ is uniformly distributed in $\mathcal{A}^m$. 

We will now prove some properties of $\beta_G^*(r)$.

\begin{lemma}
The locality $r$ of any valid index coding scheme satisfies \mbox{$r \geq 1$}. 
\end{lemma}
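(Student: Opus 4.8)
The plan is to show that each receiver must download at least $m$ codeword symbols, i.e.\ $|R_i| \ge m$ for every $i$, which immediately gives $r_i = |R_i|/m \ge 1$ and hence $r = \max_i r_i \ge 1$. The natural tool is an entropy/counting argument: the receiver $(i,K_i)$ recovers $\xb_i$ as a deterministic function of $\cb_{R_i}$ and the side information $(\xb_j, j\in K_i)$, and since $\xb_i$ is independent of $(\xb_j, j\in K_i)$ and carries $m$ units of entropy (base $|\mathcal{A}|$), all of that entropy must be supplied by $\cb_{R_i}$.

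First I would fix a receiver $i$ and, using the decoding condition $\Df_i(\cb_{R_i},(\xb_j, j\in K_i)) = \xb_i$, write $H(\xb_i \mid \cb_{R_i}, \xb_{K_i}) = 0$. Then
\begin{equation*}
m = H(\xb_i) = H(\xb_i \mid \xb_{K_i}) = I(\xb_i ; \cb_{R_i} \mid \xb_{K_i}) \le H(\cb_{R_i} \mid \xb_{K_i}) \le H(\cb_{R_i}) \le |R_i|,
\end{equation*}
where the first equality uses independence of the messages, the second uses the decoding condition, and the final inequality holds because $\cb_{R_i}$ is a vector of $|R_i|$ symbols over an alphabet of size $|\mathcal{A}|$ and entropy is measured to the base $|\mathcal{A}|$. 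Hence $|R_i| \ge m$, so $r_i \ge 1$ for every $i$, and therefore $r = \max_{i\in[N]} r_i \ge 1$.

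The argument is essentially routine once the information-theoretic setup from the excerpt is in place; the only point requiring a little care is justifying the chain of (in)equalities cleanly — in particular invoking independence of $\xb_i$ from its side information to drop the conditioning in $H(\xb_i) = H(\xb_i\mid\xb_{K_i})$, and using the fact that $\xb_i$ is deterministic given $(\cb_{R_i},\xb_{K_i})$ to equate the conditional mutual information with $H(\xb_i\mid\xb_{K_i})$. A purely combinatorial alternative would also work: if $|R_i| < m$ then the map $(\cb_{R_i},\xb_{K_i}) \mapsto \xb_i$ cannot be surjective onto $\mathcal{A}^m$ for fixed side information, since there are fewer than $|\mathcal{A}|^m$ possible values of $\cb_{R_i}$, contradicting validity. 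I expect no genuine obstacle here; the statement is a sanity-check lower bound whose proof is a one-line entropy computation.
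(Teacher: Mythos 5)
Your proof is correct and follows essentially the same route as the paper: both reduce to $m = I(\xb_i;\cb_{R_i}\mid\xb_{K_i}) \le H(\cb_{R_i}) \le |R_i|$ using the decoding condition, independence of $\xb_i$ from $\xb_{K_i}$, and base-$|\mathcal{A}|$ entropies. No gaps.
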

\begin{proof}
This can be shown using the fact $I(\xb_i;\cb_{R_i},\xb_{K_i})=H(\xb_i)=m$, which arises when considering the decoder at the $i^{\text{th}}$ receiver. Since $\xb_i$ is independent of $\xb_{K_i}$ (because $i \notin K_i$), $I(\xb_i;\xb_K)=0$, and
\begin{align}
m &= I(\xb_i;\cb_{R_i},\xb_{K_i}) = I(\xb_i;\xb_{K_i}) + I(\xb_i;\cb_{R_i}|\xb_{K_i}) \nonumber \\
  &= I(\xb_i;\cb_{R_i}|\xb_{K_i}) \leq H(\cb_{R_i}) \leq |R_i|. \label{eq:simple_bound_on_r}
\end{align} 
Hence, $r_i = |R_i|/m \geq 1$ for all $i \in [N]$, and $r \geq 1$. 
\end{proof}

Note that uncoded transmission, i.e., $\cb=(\xb_1,\dots,\xb_N)$, is a valid index code with $r=1$. Hence, we will assume that the domain of the function $\beta_G^*$ is the interval $1 \leq r < \infty$. 

\begin{lemma} \label{lem:convexity}
The function $\beta_G^*(r)$ is convex and non-increasing. 
\end{lemma}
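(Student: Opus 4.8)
The plan is to establish convexity and monotonicity separately, both by direct manipulation of index codes rather than through the dual description. For monotonicity, the idea is simply that enlarging the budget can only enlarge the feasible set: any valid index code with locality at most $r$ is also a valid index code with locality at most $r'$ for any $r' \geq r$, since the conditions defining validity and the sets $R_i$ are unchanged. Hence the infimum defining $\beta_G^*(r')$ is taken over a superset, so $\beta_G^*(r') \leq \beta_G^*(r)$; that is, $\beta_G^*$ is non-increasing.

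For convexity I would use a time-sharing / concatenation argument. Fix two localities $r^{(1)}, r^{(2)} \geq 1$ and a rational convex combination $r = \lambda r^{(1)} + (1-\lambda) r^{(2)}$ with $\lambda = a/(a+b)$, $a,b$ positive integers. Given any $\varepsilon > 0$, pick valid index codes $\mathcal{C}^{(1)}$ (message length $m_1$, codeword length $\ell_1$, query sets $R_i^{(1)}$) and $\mathcal{C}^{(2)}$ (message length $m_2$, codeword length $\ell_2$, query sets $R_i^{(2)}$) achieving broadcast rates within $\varepsilon$ of $\beta_G^*(r^{(1)})$ and $\beta_G^*(r^{(2)})$, with localities at most $r^{(1)}$ and $r^{(2)}$ respectively. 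After replacing $m_1,m_2$ by a common multiple (repeating each code a suitable number of times, which preserves both the broadcast rate and the locality), we may assume $m_1 = m_2 =: m_0$. Now form a new code on message length $m = (a+b)m_0$ by splitting each message $\xb_i$ into $a$ blocks encoded with $\mathcal{C}^{(1)}$ and $b$ blocks encoded with $\mathcal{C}^{(2)}$, concatenating the resulting codewords. Receiver $i$ queries, for each of its $a+b$ blocks, the corresponding copy of $R_i^{(1)}$ or $R_i^{(2)}$; decoding succeeds block-by-block by validity of the constituent codes. The total number of queried symbols is $a|R_i^{(1)}| + b|R_i^{(2)}|$, so the locality of receiver $i$ is $(a|R_i^{(1)}| + b|R_i^{(2)}|)/((a+b)m_0) \leq (a r^{(1)} + b r^{(2)})/(a+b) = r$, and the new broadcast rate is $(a\ell_1 + b\ell_2)/((a+b)m_0) = \lambda (\ell_1/m_0) + (1-\lambda)(\ell_2/m_0) \leq \lambda \beta_G^*(r^{(1)}) + (1-\lambda)\beta_G^*(r^{(2)}) + \varepsilon$. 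Taking the infimum over codes and letting $\varepsilon \to 0$ gives $\beta_G^*(r) \leq \lambda \beta_G^*(r^{(1)}) + (1-\lambda)\beta_G^*(r^{(2)})$ for every rational $\lambda$, and then convexity on all of $[1,\infty)$ follows because a function that is midpoint/rationally convex and (being non-increasing) locally bounded is convex.

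I expect the main technical nuisance to be the bookkeeping in the block-splitting step: one must be careful that aligning the two message lengths to a common value $m_0$ does not inflate locality (repeating a code $t$ times multiplies both $|R_i|$ and $m$ by $t$, so $r_i$ is unchanged — this is the key observation), and that the union-of-queries condition $R_1 \cup \cdots \cup R_N = [\ell]$ is reinstated by discarding any unqueried coordinates of the concatenated codeword. A secondary point is the passage from rational to real $\lambda$; invoking non-increasingness to get local boundedness, hence continuity and full convexity of a rationally convex function, is standard and I would state it briefly rather than belabor it. Everything else is routine.
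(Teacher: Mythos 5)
Your proposal is correct and follows essentially the same route as the paper's proof in Appendix~A: monotonicity from the definition, and convexity by time-sharing two near-optimal codes after aligning their message lengths (the paper uses $k_1m_2$ and $k_2m_1$ repetitions in place of your common multiple $m_0$, which is the same bookkeeping). The only cosmetic difference is the final rational-to-real step, which you justify via local boundedness of a rationally convex function while the paper argues by approximating $\lambda$ with $k_1/(k_1+k_2)$; both are fine.
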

\begin{proof}
See Appendix~\ref{app:lem:convexity}.
\end{proof}

For any valid index code, we have $|R_i| \leq \ell$, and hence, $r = \max_i |R_i|/m \leq \ell/m = \beta$. Hence, if there exists a valid index code with broadcast rate $\beta$, then its locality is at the most $\beta$, and hence, 
\begin{equation} \label{eq:betaG_simple_1}
\beta_G^*(\beta) \leq \beta. 
\end{equation} 
We will denote by $\betaopt(G)$ the infimum among the broadcast rates of all valid index codes for $G$ (considering all possible message lengths $m \geq 1$ and all possible localities $r \geq 1$). Then it follows that $\beta_G^*(r) \geq \betaopt$ for all $r \geq 1$. Together with~\eqref{eq:betaG_simple_1}, choosing $\beta=\betaopt$, we deduce
\begin{equation} \label{eq:betaG_simple_2}
\beta_G^*(\betaopt) = \betaopt.
\end{equation} 

\begin{example} \label{ex:first}
\begin{figure}[!t]
\centering
\includegraphics[width=2.25in]{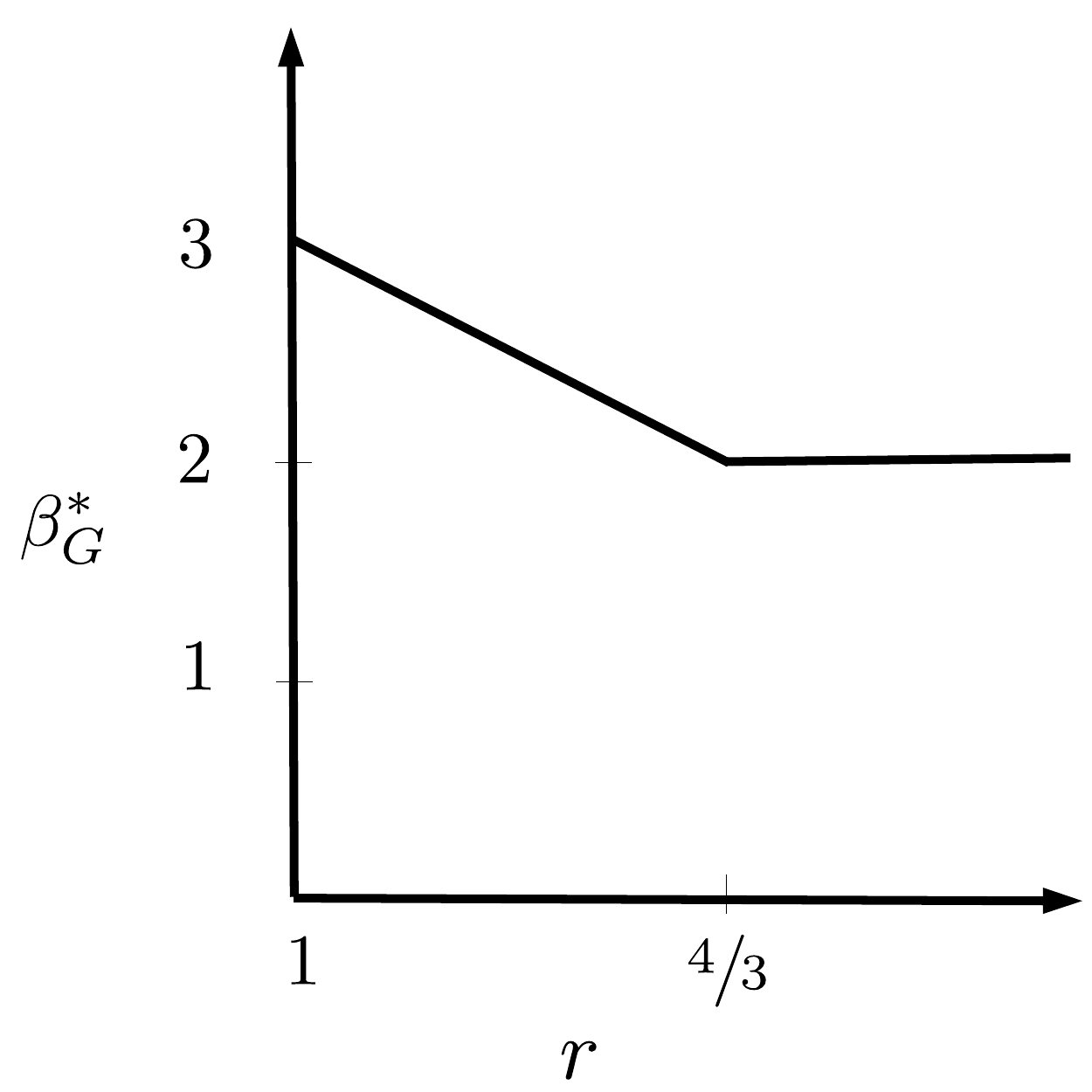}
\caption{The trade-off between the broadcast rate $\beta_G^*(r)$ and the locality $r$ for the index coding problem represented by the $3$-cycle $G$.}
\label{fig:3cycle_tradeoff}
\end{figure} 

\emph{Locality-broadcast rate trade-off of the directed $3$-cycle}.
Let $G$ be the directed $3$-cycle, i.e., $N=3$ and the three receivers $(i,K_i)$ are $(1,\{2\})$, $(2,\{3\})$ and $(3,\{1\})$. 
We show in Section~\ref{sec:3cycle} that for this index coding problem
\begin{equation*} 
 \beta_G^*(r) = \max\{6-3r,2\} \text{ for all } r \geq 1.
\end{equation*} 
This function is shown in Fig.~\ref{fig:3cycle_tradeoff}. We observe that in order to achieve any savings in rate compared to the uncoded transmission ($\beta=N=3$), we necessarily require the locality to be strictly greater than $1$, i.e., each receiver must necessarily query more codeword symbols than the message length to achieve savings in the broadcast channel uses. Also, the smallest locality required to achieve the minimum rate \mbox{$\betaopt=2$} is \mbox{$r=4/3$}.
\end{example}

\section{Fractional Coloring is Optimal for $r=1$} \label{sec:fractional_coloring}


The smallest possible locality for any index coding problem is \mbox{$r=1$}. We will now show that the optimal index coding scheme with locality $r=1$ is the scheme based on fractional coloring of the \emph{interference graph} corresponding to the problem.
We first recall some graph-theoretic terminology related to index coding as used in~\cite{BKL_arxiv_10,SDL_ISIT_13}, and then prove the optimality of fractional coloring in Sections~\ref{sec:r_1_achieve} and~\ref{sec:r_1_converse}. 

\subsection{Preliminaries}

The {\em underlying undirected side information graph} $G_u=(\Vc,\Ec_u)$ of the side information graph $G=(\Vc,\Ec)$ is the graph with vertex set $\Vc=[N]$ and an undirected edge set $\Ec_u=\left\{\,\{i,j\} \, | \, (i,j),\,(j,i) \in \Ec \right\}$, i.e., $\{i,j\} \in \Ec_u$ if and only if $i \in K_j$ and $j \in K_i$. 
%
The {\em interference graph} $\Gcpu=(\Vc,\Ecpu)$ is the undirected complement of the graph $G_u$, i.e., $\Ecpu=\left\{\, \{i,j\} \, | \, \{i,j\} \notin \Ec_u \right\}$. Note that 
\begin{equation} \label{eq:Ecpu}
\{i,j\} \in \Ecpu \text{ if and only if either } i \notin K_j \text{ or } j \notin K_i.
\end{equation} 

For positive integers $a$ and $b$, an \mbox{\em $a:b$ coloring} of the undirected graph $\Gcpu=(\Vc,\Ecpu)$ is a set $\{C_1,C_2,\dots,C_N\}$ of $N$ subsets $C_1,\dots,C_N \subset [a]$, such that $|C_1|=\cdots=|C_N|=b$ and $C_i \cap C_j = \varnothing$ if $\{i,j\} \in \Ecpu$. The elements of $[a]$ are {\em colors}, and each vertex of $\Gcpu$ is assigned $b$ colors such that no two adjacent vertices have any colors in common.
The {\em fractional chromatic number} $\chi_f$ of the undirected graph $\Gcpu$ is
\begin{equation*}
 \chi_f(\Gcpu) = \min \left\{ \frac{a}{b} ~\big\vert~ a:b \text{ coloring of } \Gcpu \text{ exists} \right\}.
\end{equation*} 
The fractional chromatic number is a rational number and can be obtained as a solution to a linear program~\cite{godsil2013algebraic}. The \emph{chromatic number} $\chi(\Gcpu)$ of the graph $\Gcpu$ is the smallest integer $a$ such that an $a:1$ coloring of $\Gcpu$ exists. In general, we have $\chi_f(\Gcpu) \leq \chi(\Gcpu)$.

The main result of this section is
\begin{theorem} \label{thm:frac_coloring}
For any unicast index coding problem $G$, the optimal broadcast rate for locality $r=1$ is $\beta_G^*(1) = \chi_f(\Gcpu)$.
\end{theorem}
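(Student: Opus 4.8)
The plan is to prove the two inequalities $\beta_G^*(1)\le\chi_f(\Gcpu)$ and $\beta_G^*(1)\ge\chi_f(\Gcpu)$ separately, handling achievability by a clique‑cover type construction and the converse by an entropy‑tightness argument.

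For achievability I would turn an optimal fractional colouring into an index code. Fix an $a:b$ colouring $\{C_1,\dots,C_N\}$ of $\Gcpu$ attaining $a/b=\chi_f(\Gcpu)$, set $m=b$ and $\ell=a$, and index the $m$ coordinates of each message $\xb_i=(x_{i,k})_{k\in C_i}$ by its colour set $C_i$. For a colour $k$, the set $S_k=\{i: k\in C_i\}$ is pairwise non‑adjacent in $\Gcpu$, hence by \eqref{eq:Ecpu} a clique in $G_u$, so every receiver in $S_k$ holds the $k$‑th coordinate of every other message of $S_k$ as side information. After imposing an arbitrary finite group structure on $\mathcal{A}$, transmit $c_k=\sum_{i\in S_k}x_{i,k}$. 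Receiver $i$ queries $R_i=C_i$, so $|R_i|=b=m$ and $r_i=1$, and recovers $x_{i,k}=c_k-\sum_{i'\in S_k\setminus\{i\}}x_{i',k}$ for each $k\in C_i$. This is a valid index code of locality $1$ and broadcast rate $\ell/m=a/b$, so $\beta_G^*(1)\le\chi_f(\Gcpu)$.

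The converse carries the real content. Start from any valid index code of locality $r=1$ with parameters $m,\ell$ and query sets $R_1,\dots,R_N$. Locality $1$ forces $|R_i|\le m$, while the chain \eqref{eq:simple_bound_on_r} gives $m\le H(\cb_{R_i})\le|R_i|$, so every inequality there is tight; I would extract from this that $|R_i|=m$, that $\cb_{R_i}$ is uniform on $\mathcal{A}^{m}$, that $H(\cb_{R_i}\mid\xb_{K_i})=m$, and that $H(\cb_{R_i}\mid\xb_i,\xb_{K_i})=0$. The last equality says $\cb_{R_i}=\phi_i(\xb_i,\xb_{K_i})$ for some function $\phi_i$, and the others force, for every fixed value $w$ of $\xb_{K_i}$, the map $\phi_i(\,\cdot\,,w)\colon\mathcal{A}^m\to\mathcal{A}^m$ to be a bijection, since it pushes the uniform law to the uniform law. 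The key claim is then: $\{i,j\}\in\Ecpu$ implies $R_i\cap R_j=\varnothing$. To prove it, use \eqref{eq:Ecpu} and assume without loss of generality $j\notin K_i$, and suppose some $t\in R_i\cap R_j$. Freeze every message except $\xb_j$: since $j\notin K_i$ and $j\neq i$ this fixes $\xb_i$ and $\xb_{K_i}$, hence fixes the coordinate $c_t$ of $\phi_i(\xb_i,\xb_{K_i})$; but $c_t$ is also the $t$‑th coordinate of $\phi_j(\xb_j,\xb_{K_j})$ with $\xb_{K_j}$ now fixed (note $j\notin K_j$), so that coordinate would be constant as $\xb_j$ ranges over $\mathcal{A}^m$, contradicting bijectivity of $\phi_j(\,\cdot\,,w)$. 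This proves the claim, so $\{R_i\}_{i\in[N]}$ with $|R_i|=m$ is an $\ell:m$ colouring of $\Gcpu$, whence $\chi_f(\Gcpu)\le\ell/m=\beta$; taking the infimum over all valid codes of locality $1$ yields $\chi_f(\Gcpu)\le\beta_G^*(1)$.

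The main obstacle is exactly the disjointness claim. The delicate point is that $\cb_{R_i}$, being a function of $(\xb_i,\xb_{K_i})$, is only \emph{marginally} independent of $\xb_j$ when $j\notin K_i$, which by itself does not stop a single queried symbol $c_t$ from also varying with $\xb_j$ via receiver $j$'s encoder; the fix is to pass to the conditional world where all messages but $\xb_j$ are frozen, where the tightness‑derived bijectivity of $\phi_j(\,\cdot\,,w)$ converts a shared queried symbol into a genuine contradiction. A secondary point to be careful about is that $\mathcal{A}$ is an arbitrary finite alphabet, so the achievability construction must invoke a group (or, more generally, quasigroup) operation on $\mathcal{A}$ rather than field arithmetic.
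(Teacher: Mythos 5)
Your proposal is correct and follows essentially the same route as the paper: achievability is the identical fractional-coloring code with $R_i=C_i$ and a group structure on $\mathcal{A}$, and the converse rests on the same key lemma that $\{i,j\}\in\Ecpu$ forces $R_i\cap R_j=\varnothing$, after which $\{R_1,\dots,R_N\}$ is an $\ell:m$ coloring of $\Gcpu$ and $\beta\geq\chi_f(\Gcpu)$. Your freeze-all-messages-but-$\xb_j$ bijectivity argument is a functional restatement of the paper's entropic step (Lemma~\ref{lem:basic_lemma_1} with a suitable $P$, yielding $H(c_t\mid\xb_P)=0$ versus $H(c_t\mid\xb_P)=1$), so the two proofs coincide in substance.
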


\subsection{Proof of achievability for Theorem~\ref{thm:frac_coloring}} \label{sec:r_1_achieve}

It is well known that there exists a coding scheme, called the {\em fractional clique covering}  or {\em fractional coloring} solution, for any index coding problem $G$ with broadcast rate $\beta=\chi_f(\Gcpu)$, see~\cite{BKL_arxiv_10}. 
It is straightforward to observe that $r=1$ for this coding scheme; see Appendix~\ref{app:coloring_achievability} for details. 
It then follows that $\beta_G^*(1) \leq \chi_f(\Gcpu)$.



\subsection{Proof of converse for Theorem~\ref{thm:frac_coloring}} \label{sec:r_1_converse}

Consider any valid index code $(\Ef,\Df_1,\dots,\Df_N)$, possibly non linear, for $G$ with locality $r=1$, message length $m$ and codeword length $\ell$. We will now show that the broadcast rate of this index code is at least $\chi_f(\Gcpu)$.
Since $r=1$, from~\eqref{eq:defn_r} 
and~\eqref{eq:simple_bound_on_r}, we deduce that $|R_1|=\cdots=|R_N|=m$ for this valid index code.

\begin{lemma} \label{lem:basic_lemma_1}
For any $i \in [N]$ and any $P \subset [N]$ such that $i \notin P$, we have 
{(i)} $I(\cb_{R_i};\xb_i|\xb_{K_i \cup P})=m$; 
{(ii)} $H(\cb_{R_i}|\xb_i,\xb_{K_i})=0$; and
{(iii)} $H(\cb_{R_i}|\xb_P)=m$.

\end{lemma}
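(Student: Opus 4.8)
The plan is to extract everything from two consequences of the code being valid with locality exactly $1$: the decoder identity $H(\xb_i\mid\cb_{R_i},\xb_{K_i})=0$ at receiver $i$, and the size bound $H(\cb_{R_i})\le|R_i|=m$, where $|R_i|=m$ was already deduced from $r=1$. Beyond these, the argument uses only that the messages are i.i.d.\ uniform, so that $H(\xb_S)=m|S|$ for any $S\subset[N]$ and $\xb_i$ is independent of $\xb_S$ whenever $i\notin S$. I would first prove (ii). From the decoder identity, $H(\xb_i,\cb_{R_i}\mid\xb_{K_i})=H(\cb_{R_i}\mid\xb_{K_i})$; expanding the same quantity the other way and using $i\notin K_i$ gives $H(\xb_i,\cb_{R_i}\mid\xb_{K_i})=H(\xb_i\mid\xb_{K_i})+H(\cb_{R_i}\mid\xb_i,\xb_{K_i})=m+H(\cb_{R_i}\mid\xb_i,\xb_{K_i})$. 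Hence $H(\cb_{R_i}\mid\xb_{K_i})=m+H(\cb_{R_i}\mid\xb_i,\xb_{K_i})\ge m$, while $H(\cb_{R_i}\mid\xb_{K_i})\le H(\cb_{R_i})\le m$; so both inequalities are equalities. This yields (ii), $H(\cb_{R_i}\mid\xb_i,\xb_{K_i})=0$, together with the byproducts $H(\cb_{R_i})=m$ and $H(\cb_{R_i}\mid\xb_{K_i})=m$, which I will reuse.

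For (i): since $i\notin K_i\cup P$, independence gives $H(\xb_i\mid\xb_{K_i\cup P})=m$, and conditioning cannot increase entropy so $H(\xb_i\mid\cb_{R_i},\xb_{K_i\cup P})\le H(\xb_i\mid\cb_{R_i},\xb_{K_i})=0$ by the decoder identity; subtracting gives $I(\cb_{R_i};\xb_i\mid\xb_{K_i\cup P})=m$. For (iii), the upper bound is immediate from the byproduct above: $H(\cb_{R_i}\mid\xb_P)\le H(\cb_{R_i})=m$. For the matching lower bound, I would apply the chain rule $I(\cb_{R_i};\xb_i,\xb_{K_i\setminus P}\mid\xb_P)=I(\cb_{R_i};\xb_{K_i\setminus P}\mid\xb_P)+I(\cb_{R_i};\xb_i\mid\xb_{K_i\setminus P},\xb_P)$, note that $(K_i\setminus P)\cup P=K_i\cup P$ so the last term equals $I(\cb_{R_i};\xb_i\mid\xb_{K_i\cup P})=m$ by part (i), and the first term is nonnegative; thus $I(\cb_{R_i};\xb_i,\xb_{K_i\setminus P}\mid\xb_P)\ge m$, and since this quantity is at most $H(\cb_{R_i}\mid\xb_P)$ we conclude $H(\cb_{R_i}\mid\xb_P)=m$.

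I do not anticipate a genuine obstacle: the whole content sits in part (ii), whose point is the (perhaps surprising) structural fact that locality $1$ forces $\cb_{R_i}$ to be a deterministic function of $(\xb_i,\xb_{K_i})$ alone, even though a priori it may depend on all $N$ messages; once (ii) and the byproduct $H(\cb_{R_i})=m$ are in hand, (i) and (iii) are each a two-line entropy computation. The only place that needs a little care is bookkeeping the conditioning sets so that the index $i$ is always excluded — which is guaranteed by $i\notin K_i$ and the hypothesis $i\notin P$ — so that the independence relation $H(\xb_i\mid\xb_S)=m$ may be invoked at each step.
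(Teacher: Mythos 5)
Your proof is correct and uses essentially the same ingredients as the paper's: the decoder identity, the independence of the messages, and the bound $H(\cb_{R_i})\le |R_i|=m$, combined by a sandwich argument. The only differences are cosmetic — you prove (ii) first (with $P=\varnothing$) and obtain (iii) from part (i) via the chain rule, whereas the paper derives $H(\cb_{R_i}\mid\xb_{K_i\cup P})=m$ for general $P$ and gets (iii) by monotonicity of conditioning — so this is the same proof in a slightly different order.
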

\begin{proof}
We first observe that
$I(\cb_{R_i};\xb_i|\xb_{K_i \cup P})=H(\xb_i|\xb_{K_i \cup P}) - H(\xb_i|\cb_{R_i},\xb_{K_i \cup P})$. 
Since \mbox{$i \notin K_i \cup P$}, $\xb_i$ is independent of $\xb_{K_i \cup P}$. Also, $\xb_i$ can be decoded using $\cb_{R_i}$ and $\xb_{K_i}$. Hence, $H(\xb_i|\xb_{K_i \cup P})=m$ and $H(\xb_i|\cb_{R_i},\xb_{K_i \cup P})=0$, thereby proving part~\emph{(i)}.

Using the result in part~\emph{(i)} and decomposing the mutual information term $I(\cb_{R_i};\xb_i|\xb_{K_i \cup P})$, we have
\begin{align} \label{eq:key_for_basic_lemma_0}
m = H(\cb_{R_i}|\xb_{K_i \cup P}) - H(\cb_{R_i}|\xb_i,\xb_{K_i \cup P}). 
\end{align} 
Since $\cb_{R_i}$ is a length $m$ vector, we have $H(\cb_{R_i}|\xb_{K_i \cup P}) \leq m$. 
Also, $H(\cb_{R_i}|\xb_i,\xb_{K_i \cup P}) \geq 0$. 
Considering these facts together with~\eqref{eq:key_for_basic_lemma_0}, we deduce that
\begin{equation} \label{eq:key_for_basic_lemma_1}
H(\cb_{R_i}|\xb_i,\xb_{K_i \cup P})=0 \text{ and } H(\cb_{R_i}|\xb_{K_i \cup P}) = m.
\end{equation} 
Observe that~\eqref{eq:key_for_basic_lemma_1} holds for any choice of $P$ such that $i \notin P$. Choosing \mbox{$P=\varnothing$} in the first equality in~\eqref{eq:key_for_basic_lemma_1} proves part~\emph{(ii)} of this lemma.
%
Now using the fact that $\cb_{R_i}$ is of length $m$, and the second equality in~\eqref{eq:key_for_basic_lemma_1}, we have
\begin{align*}
 m \geq H(\cb_{R_i}) \geq H(\cb_{R_i}|\xb_P) \geq H(\cb_{R_i}|\xb_{K_i \cup P}) = m.
\end{align*} 
This shows that $H(\cb_{R_i}|\xb_P)=m$, proving part~\emph{(iii)}.
\end{proof}

\begin{lemma} \label{lem:r_1_R_non_intersecting}
For any $\{i,j\} \in \Ecpu$, we have $R_i \cap R_j = \varnothing$.
\end{lemma}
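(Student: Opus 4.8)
The plan is to condition on the block of all messages other than $\xb_i$ and $\xb_j$ and show that, conditioned on this block, the subvectors $\cb_{R_i}$ and $\cb_{R_j}$ are independent and each retains full entropy $m$; a counting argument then forces $R_i\cap R_j=\varnothing$. To set up, since $\{i,j\}\in\Ecpu$, \eqref{eq:Ecpu} guarantees that at least one of $i\notin K_j$ or $j\notin K_i$ holds, and because the conclusion $R_i\cap R_j=\varnothing$ is symmetric in $i$ and $j$, I may assume without loss of generality that $i\notin K_j$. Write $Z=\xb_{[N]\setminus\{i,j\}}$ for the block of the remaining messages, and observe that conditioning on the pair $(\xb_j,Z)$ is the same as conditioning on $\xb_{[N]\setminus\{i\}}$.

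Next I collect the conditional entropies I need, in sharp pointwise form. Applying Lemma~\ref{lem:basic_lemma_1}(iii) with $P=[N]\setminus\{i,j\}$ gives $H(\cb_{R_i}|Z)=H(\cb_{R_j}|Z)=m$, and applying it with $P=[N]\setminus\{i\}$ gives $H(\cb_{R_i}|\xb_j,Z)=m$. Since $|R_i|=|R_j|=m$ (forced by $r=1$), each of the conditional entropies $H(\cb_{R_i}|Z=z)$, $H(\cb_{R_j}|Z=z)$ and $H(\cb_{R_i}|\xb_j,Z=z)$ is at most $m$; being an average over the uniformly distributed conditioning variables that equals $m$, each of these quantities equals $m$ for \emph{every} realization $z$. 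Moreover, since $i\notin K_j$ and $j\notin K_j$ we have $K_j\subseteq[N]\setminus\{i,j\}$, so $\xb_{K_j}$ is part of $Z$; combining this with Lemma~\ref{lem:basic_lemma_1}(ii), namely $H(\cb_{R_j}|\xb_j,\xb_{K_j})=0$, yields $H(\cb_{R_j}|\xb_j,Z)=0$, i.e. for each $z$ the subvector $\cb_{R_j}$ is a deterministic function of $\xb_j$.

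Now fix any realization $z$ of $Z$. Using that $\cb_{R_j}$ is a function of $\xb_j$ given $Z=z$ (the data processing inequality) together with the entropies above,
\begin{align*}
0 \le I(\cb_{R_i};\cb_{R_j}|Z=z) &\le I(\cb_{R_i};\xb_j|Z=z) \\
 &= H(\cb_{R_i}|Z=z) - H(\cb_{R_i}|\xb_j,Z=z) \\
 &= m - m = 0,
\end{align*}
so $I(\cb_{R_i};\cb_{R_j}|Z)=0$, i.e. $\cb_{R_i}$ and $\cb_{R_j}$ are conditionally independent given $Z$. Consequently $H(\cb_{R_i\cup R_j})=H(\cb_{R_i},\cb_{R_j}) \ge H(\cb_{R_i},\cb_{R_j}|Z)=H(\cb_{R_i}|Z)+H(\cb_{R_j}|Z)=2m$. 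On the other hand $H(\cb_{R_i\cup R_j})\le|R_i\cup R_j|\le|R_i|+|R_j|=2m$, so all of these inequalities are equalities; in particular $|R_i\cup R_j|=|R_i|+|R_j|$, which is possible only if $R_i\cap R_j=\varnothing$.

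The step I expect to be the main obstacle is the upgrade of the averaged identities coming from Lemma~\ref{lem:basic_lemma_1}(iii) to hold for each fixed realization $z$ of the conditioning messages, since it is precisely the per-$z$ data processing step that makes the conditional independence argument legitimate; once that is in hand, everything else is bookkeeping with the chain rule and the constraint $|R_i|=|R_j|=m$.
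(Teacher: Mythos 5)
Your proof is correct, but it is packaged differently from the paper's. The paper fixes a hypothetical common index $t\in R_i\cap R_j$, conditions on $\xb_P$ with $P=\{i\}\cup K_i$ (legitimate because the WLOG assumption $j\notin K_i$ keeps $j$ out of $P$), and derives a single-coordinate contradiction: part (ii) of Lemma~\ref{lem:basic_lemma_1} forces $H(c_t|\xb_P)=0$, while part (iii) applied to receiver $j$ forces $\cb_{R_j}$, hence $c_t$, to be uniform given $\xb_P$, so $H(c_t|\xb_P)=1$. You instead condition on all messages outside $\{i,j\}$, use (ii) to make $\cb_{R_j}$ a function of $\xb_j$ and (iii) twice to show $\cb_{R_i}$ is independent of $\xb_j$ given the rest, conclude conditional independence of $\cb_{R_i}$ and $\cb_{R_j}$ with full entropy $m$ each, and then let the bound $H(\cb_{R_i\cup R_j})\leq|R_i\cup R_j|\leq 2m$ force disjointness directly, with no contradiction argument. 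Both proofs rest on exactly parts (ii) and (iii) of Lemma~\ref{lem:basic_lemma_1}; the paper's local argument is shorter, while yours is a global entropy-counting argument that arguably makes the ``$2m$ units of entropy need $2m$ distinct symbols'' mechanism more transparent. One small remark: the step you flag as the main obstacle, upgrading the averaged identities to every realization $z$, is sound here (the conditioning messages are uniform with full support), but it is also avoidable: since $H(\cb_{R_j}|\xb_j,Z)=0$ gives the Markov chain $\cb_{R_i}-\xb_j-\cb_{R_j}$ conditioned on $Z$, you can apply the data processing inequality and the identity $I(\cb_{R_i};\xb_j|Z)=H(\cb_{R_i}|Z)-H(\cb_{R_i}|\xb_j,Z)=0$ entirely in averaged form, and the rest of your counting goes through unchanged. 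Your WLOG choice $i\notin K_j$ versus the paper's $j\notin K_i$ is immaterial by the symmetry of the claim.
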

\begin{proof}
Using~\eqref{eq:Ecpu}, we will assume without loss of generality that $j \notin K_i$. We will now assume that $R_i \cap R_j \neq \varnothing$ and prove the lemma by contradiction. Let \mbox{$t \in R_i \cap R_j$} and \mbox{$P=\{i\} \cup K_i$}. 
From part~\emph{(ii)} of Lemma~\ref{lem:basic_lemma_1}, we have $H(\cb_{R_i}|\xb_i,\xb_{K_i})=0$. In particular, since $t \in R_i$, we have 
\begin{equation} \label{eq:lem:contradiction_1}
H(c_t|\xb_i,\xb_{K_i})=H(c_t|\xb_P)=0.
\end{equation} 

Note that \mbox{$j \notin P$} since \mbox{$j \neq i$} and \mbox{$j \notin K_i$}.
From part~\emph{(iii)} of Lemma~\ref{lem:basic_lemma_1}, we observe that $H(\cb_{R_j}|\xb_P)=m$. This implies that for any given realization of $\xb_P$, the vector $\cb_{R_j}$ is uniformly distributed over $\Ac^m$. Hence, the $m$ coordinates of $\cb_{R_j}$ are independent and uniformly distributed over $\Ac$. Since $t \in R_j$, we conclude that for any given realization of $\xb_P$, $c_t$ is uniformly distributed over $\Ac$, and hence, \mbox{$H(c_t|\xb_P)=1$} which contradicts~\eqref{eq:lem:contradiction_1}. 
\end{proof}

\begin{lemma} \label{lem:r_1_converse}
For any valid index coding scheme for $G$ with $r=1$, the broadcast rate $\beta \geq \chi_f(\Gcpu)$.
\end{lemma}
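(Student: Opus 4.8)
The plan is to show that the sets $R_1,\dots,R_N$, restricted appropriately, furnish a fractional coloring of the interference graph $\Gcpu$, so that $\ell \geq m\,\chi_f(\Gcpu)$. Recall from the discussion preceding Lemma~\ref{lem:basic_lemma_1} that $r=1$ forces $|R_i|=m$ for every $i$, and that by assumption $R_1 \cup \cdots \cup R_N = [\ell]$. By Lemma~\ref{lem:r_1_R_non_intersecting}, whenever $\{i,j\}\in\Ecpu$ we have $R_i\cap R_j=\varnothing$. This is exactly the disjointness condition required of color classes: if we think of each codeword coordinate $t\in[\ell]$ as a ``color'' and assign to vertex $i$ the set of colors $R_i\subset[\ell]$, then $|R_i|=m$ for all $i$ and adjacent vertices in $\Gcpu$ receive disjoint color sets. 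Hence $\{R_1,\dots,R_N\}$ is an $\ell:m$ coloring of $\Gcpu$.

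By the definition of the fractional chromatic number, the existence of an $\ell:m$ coloring of $\Gcpu$ implies $\chi_f(\Gcpu)\leq \ell/m=\beta$. Therefore $\beta\geq\chi_f(\Gcpu)$, as claimed. Combining this with the achievability bound $\beta_G^*(1)\leq\chi_f(\Gcpu)$ from Section~\ref{sec:r_1_achieve} and the fact that this converse holds for every valid index code with $r=1$, we obtain $\beta_G^*(1)=\chi_f(\Gcpu)$, completing the proof of Theorem~\ref{thm:frac_coloring}.

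The only subtlety to check carefully is that the $R_i$ genuinely satisfy the combinatorial definition of an $a:b$ coloring as stated in the Preliminaries: we need $|R_i|=b$ for a common value $b$ (here $b=m$, guaranteed by the $r=1$ argument via~\eqref{eq:simple_bound_on_r} and~\eqref{eq:defn_r}) and $R_i\cap R_j=\varnothing$ for edges of $\Gcpu$ (guaranteed by Lemma~\ref{lem:r_1_R_non_intersecting}). No constraint is imposed on non-adjacent vertices, so overlaps among those $R_i$ are harmless. I do not anticipate a real obstacle here; the main work has already been front-loaded into Lemma~\ref{lem:r_1_R_non_intersecting}, and this lemma is essentially the observation that that disjointness statement is precisely a fractional-coloring certificate. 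If one wanted to be scrupulous, one could note that the definition requires the color classes to be labeled by $[a]$, which matches $[\ell]$ after identifying colors with coordinate indices, and that $R_1\cup\cdots\cup R_N=[\ell]$ means no color is wasted, though this last point is not even needed for the inequality.
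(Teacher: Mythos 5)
Your proposal is correct and follows essentially the same route as the paper: it invokes $|R_i|=m$ (forced by $r=1$) together with Lemma~\ref{lem:r_1_R_non_intersecting} to conclude that $\{R_1,\dots,R_N\}$ is an $\ell:m$ coloring of $\Gcpu$, whence $\beta=\ell/m\geq\chi_f(\Gcpu)$. Your added remarks on labeling colors by $[\ell]$ and on the harmlessness of overlaps between non-adjacent vertices are accurate but not needed beyond what the paper states.
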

\begin{proof}
From Lemma~\ref{lem:r_1_R_non_intersecting}, the subsets $R_1,\dots,R_N \subset [\ell]$ are such that $R_i \cap R_j =\varnothing$ if $\{i,j\} \in \Ecpu$ and $|R_i|=m$ for all $i \in [N]$. Hence, $\{R_1,\dots,R_N\}$ is an \mbox{$\ell:m$} coloring of $\Gcpu$. Consequently, the broadcast rate
$\beta = \frac{\ell}{m} \geq \chi_f(\Gcpu)$.
\end{proof}

Combining the converse result in Lemma~\ref{lem:r_1_converse} with the achievability result in Section~\ref{sec:r_1_achieve}, we arrive at Theorem~\ref{thm:frac_coloring}.

\subsection{Corollary and remarks}

Theorem~\ref{thm:frac_coloring} can be easily generalized to the case where the message length $m$ is fixed.

\begin{corollary}
The optimal broadcast rate for index coding problem $G$ with locality $r=1$ and message length $m$ is 
\begin{equation*}
\min \left\{ \frac{a}{m} \, \Big\vert \, \text{an } a:m \text{ coloring of } \Gcpu \text{ exists} \,\right\}.
\end{equation*} 
\end{corollary}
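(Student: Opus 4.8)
The plan is to reuse the proof of Theorem~\ref{thm:frac_coloring} essentially verbatim, after observing that both its achievability construction and its converse argument are already ``fixed-$m$'' statements: the coding scheme attached to an $a:m$ coloring has message length exactly $m$, and the converse shows that the query sets of any locality-$1$ code with message length $m$ and codeword length $\ell$ constitute an $\ell:m$ coloring of $\Gcpu$. So nothing beyond Theorem~\ref{thm:frac_coloring} is really needed except the remark that no averaging over block lengths is involved.

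For achievability I would take an $a:m$ coloring $\{C_1,\dots,C_N\}$ of $\Gcpu$, i.e.\ sets $C_i\subset[a]$ with $|C_i|=m$ and $C_i\cap C_j=\varnothing$ whenever $\{i,j\}\in\Ecpu$, and instantiate the fractional coloring solution of~\cite{BKL_arxiv_10} with it: set the message length to $m$, label the $m$ coordinates of $\xb_i$ by the $m$ colors in $C_i$ and write $x_{i,k}$ for the coordinate of $\xb_i$ labelled by $k\in C_i$, and broadcast, for each color $k\in[a]$, the coded symbol that the fractional coloring solution produces for color $k$ (a fixed combination of the symbols $x_{i,k}$ over all $i$ with $k\in C_i$). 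This gives $\ell=a$, hence $\beta=a/m$, and $R_i=C_i$, hence $|R_i|=m$ and $r_i=1$. Receiver $i$ recovers $x_{i,k}$ from coded symbol $k$ because every other contributor $i'$ satisfies $k\in C_i\cap C_{i'}\neq\varnothing$, so $\{i,i'\}\notin\Ecpu$, which by~\eqref{eq:Ecpu} forces $i'\in K_i$, i.e.\ $\xb_{i'}$ is side information and its coordinate $x_{i',k}$ is known. Hence a valid locality-$1$ index code with message length $m$ and rate $a/m$ exists for every $a$ admitting an $a:m$ coloring; such $a$ exist (e.g.\ $a=Nm$ with the $C_i$ pairwise disjoint), so the optimal broadcast rate with locality $1$ and message length $m$ is at most $\min\{a/m \mid a:m\text{ coloring of }\Gcpu\text{ exists}\}$.

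For the converse I would take any valid index code for $G$ with locality $r=1$, message length $m$ and codeword length $\ell$. Exactly as in Section~\ref{sec:r_1_converse}, $r=1$ together with~\eqref{eq:simple_bound_on_r} forces $|R_i|=m$ for all $i$, and Lemma~\ref{lem:r_1_R_non_intersecting} gives $R_i\cap R_j=\varnothing$ whenever $\{i,j\}\in\Ecpu$; hence $\{R_1,\dots,R_N\}$ is an $\ell:m$ coloring of $\Gcpu$, so $\ell\geq\min\{a\mid a:m\text{ coloring exists}\}$ and $\beta=\ell/m\geq\min\{a/m\mid a:m\text{ coloring exists}\}$. Combining the two bounds yields the corollary. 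I do not expect a genuine obstacle: the only two points deserving a word of care are that the achievability code has block length exactly $m$ rather than a multiple of it (immediate from the construction, since the coloring already has denominator $m$) and that the minimum in the statement is attained (which holds because the set of $a$ admitting an $a:m$ coloring is nonempty and upward closed).
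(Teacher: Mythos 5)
Your proposal is correct and follows essentially the same route as the paper: achievability by instantiating the fractional coloring scheme of Appendix~\ref{app:coloring_achievability} with color classes of size exactly $m$, and converse by noting that $|R_i|=m$ (from locality $1$ and~\eqref{eq:simple_bound_on_r}) together with Lemma~\ref{lem:r_1_R_non_intersecting} makes $\{R_1,\dots,R_N\}$ an $\ell:m$ coloring of $\Gcpu$. Your added remarks on the existence of an $a:m$ coloring and attainment of the minimum are harmless extra care, not a deviation.
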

\begin{proof}
The achievability result is similar to the arguments used in Appendix~\ref{app:coloring_achievability} with the additional restriction that the subsets of colors $C_1,\dots,C_N$ are all of size $m$. Converse follows by recognizing that the set of subsets $\{R_1,\dots,R_N\}$ is an $\ell:m$ coloring of $\Gcpu$.
\end{proof}

In~\cite{HaL_ISIT_12} it is remarked that the optimal broadcast rate with $r=1$ among scalar linear index codes over a finite field (i.e., $m=1$, $\mathcal{A}=\Fb_q$ and the encoder $\Ef: \Fb_q^N \to \Fb_q^{\ell}$ is a linear transform) is the chromatic number $\chi(\Gcpu)$. 
Our results in this section provide a strong generalization of this remark.

\section{Designing Locally Decodable Index Codes} \label{sec:design}

We will assume that the alphabet $\Ac$ is a finite field $\Fb_q$ of size $q$. An index code is called {\em vector linear} if the encoder $\Ef:\Fb_q^{mN} \to \Fb_q^{\ell}$ is an $\Fb_q$-linear map. A vector linear index code with $m=1$ is said to be {\em scalar linear}.
First, we will briefly recall the relevant properties of scalar linear index codes from~\cite{YBJK_IEEE_IT_11,DSC_IT_12}, and in the rest of this section we provide constructions of index codes with small locality.

\subsection{Preliminaries}

Given a unicast index coding problem $G$, a matrix $\p{A} \in \Fb_q^{N \times N}$ is said to {\em fit} $G$ if: \emph{(i)}~$\p{A}_{i,i}=1$ for all $i \in [N]$, and \emph{(ii)}~$\p{A}_{j,i}=0$ for all $i \neq j$ such that $j \notin K_i$. 
A matrix $\p{B} \in \Fb_q^{N \times \ell}$ serves as a valid scalar linear encoding matrix for the index coding problem $G$ if and only if there exists an $\p{A} \in \Fb_q^{N \times N}$ such that $\p{A}$ fits $G$ and the column space of $\p{B}$ contains the column space of $\p{A}$ (follows from \cite[Remark~4.6]{DSC_IT_12}). 
The encoder generates the codeword as $\cb=\xb\p{B}$. 
The decoding at the $i^{\text{th}}$ receiver proceeds as follows. Denote the $N$ columns of $\p{A}$ as $\p{a}_1^T,\dots,\p{a}_N^T$, where the superscript $T$ denotes the transpose operation, and each $\p{a}_i$ is a row vector. Since $\p{a}_i^T$ belongs to the column space of $\p{B}$, there exists a vector $\p{d}_i \in \Fb_q^{\ell}$ such that $\p{a}_i^T=\p{B}\p{d}_i^T$. 
The receiver computes $\p{c}\p{d}_i^T$ which equals $\p{xBd}_i^T=\p{xa}_i^T=\sum_{j \in [N]}x_j\p{A}_{j,i}$. Since $\p{A}$ fits $G$, we have $\p{cd}_i^T = x_i + \sum_{j \in K_i}x_jA_{j,i}$. Using the side information, the receiver can recover $x_i$ as $\p{cd}_i^T - \sum_{j \in K_i}x_j\p{A}_{j,i}$.

In order to compute $\p{c}\p{d}_i^T$, the $i^{\text{th}}$ receiver needs to observe only those components of $\p{c}$ which correspond to the non-zero entries of $\p{d}_i^T$. Hence, the locality of the $i^{\text{th}}$ receiver is $r_i=\wt(\p{d}_i)$, i.e., the Hamming weight of $\p{d}_i$. 
If $\p{a}_i^T$ is one of the columns of the encoding matrix $\p{B}$, then $\p{d}_i$ can be chosen such that $\wt(\p{d}_i)=1$ resulting in $r_i=1$. If $\p{B}$ does not contain $\p{a}_i^T$ as one of its columns, then we have the naive upper bound $r_i = \wt(\p{d}_i) \leq \ell$. 

For a given side information graph $G$, the smallest rank among all matrices $\p{A} \in \Fb_q^{N \times N}$ that fit $G$ is called the {\em minrank of $G$ over $\Fb_q$} and is denoted as $\kappa_q(G)$. 
The minimum broadcast rate among scalar linear codes is $\kappa_q(G)$ and can be achieved by using any matrix $\p{B}$ whose columns form a basis of the column space of a rank-$\kappa_q(G)$ matrix $\p{A}$ that fits $G$.


\subsection{Separation based coding scheme}

We first consider a separation based scalar linear index coding technique over $\Fb_q$ for a unicast problem $G$ where the encoder matrix $\p{B}$ is the product of two matrices: an optimal index coding matrix $\p{B}'$ with number of columns equal to $\kappa_q(G)$, and the parity-check matrix $\p{H}$ of a \emph{covering code} $\mathscr{C}$ with co-dimension $\kappa_q(G)$. 
The linear code $\mathscr{C}$ is chosen such that its covering radius is equal to the desired locality $r$, i.e., Hamming spheres of radius $r$ centered around the codewords of $\mathscr{C}$ cover the entire Hamming space. Among all covering codes over $\Fb_q$ with covering radius $r$ and co-dimension $\kappa_q(G)$ we choose the one with the smallest possible blocklength $n_q(r,\kappa_q(G))$.

Following a well known property of covering codes, we observe that any column vector of length $\kappa_q(G)$ over $\Fb_q$ is some linear combination of at the most $r$ columns of $\p{H}$. Thus, if $\p{A}$ is a fitting matrix corresponding to $\p{B}'$, every column of $\p{A}$ can be expressed as a linear combination of at the most $r$ columns of the matrix \mbox{$\p{B}=\p{B}'\p{H}$}. Thus, $\p{B}$ is a valid scalar linear encoder matrix for $G$ with locality $r$ and blocklength $n_q(r,\kappa_q(G))$.

\subsection{Codes from acyclic induced subgraph covers of $G$} \label{sec:ais_cover}

In this section we will provide a technique to construct vector linear codes of small locality by using the acyclic induced subgraphs of $G$. 
For any subset $S \subset [N]$ of vertices of $G$, let $G_S$ denote the subgraph of $G$ induced by $S$.
We will require the following result.

\begin{lemma} \label{lem:ais_lemma}
Let the subgraph $G_S$ of $G$ induced by the subset $S \subset [N]$ be a directed acyclic graph. If there exists is a valid scalar linear encoding matrix for $G$ with codelength $\ell$, then there exists a scalar linear index code with codelength $\ell$ for $G$ such that the locality of every receiver $i \in S$ is $r_i=1$.
\end{lemma}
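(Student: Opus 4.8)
The plan is to start from an optimal scalar linear encoding matrix $\p{B} \in \Fb_q^{N \times \ell}$ for $G$ (with the corresponding fitting matrix $\p{A}$), and modify only the \emph{decoding vectors} $\p{d}_i$ for the receivers $i \in S$, keeping the encoder $\p{B}$ — and hence the codelength $\ell$ — unchanged. By the discussion preceding the lemma, receiver $i$ has locality $r_i = \wt(\p{d}_i)$ where $\p{a}_i^T = \p{B}\p{d}_i^T$, so it suffices to show that for each $i \in S$ we can choose $\p{d}_i$ with $\wt(\p{d}_i) = 1$, equivalently that the column $\p{a}_i^T$ of $\p{A}$ itself already lies among the columns of $\p{B}$ (or can be arranged to). The obstacle is that an \emph{arbitrary} optimal $\p{B}$ need not contain the vectors $\p{a}_i^T$ as columns; so the first real step is to replace $\p{B}$ by a better-chosen valid encoder of the same length whose columns include $\{\p{a}_i^T : i \in S\}$.

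The key observation making this possible is that $G_S$ is acyclic. Order the vertices of $S$ topologically, say $s_1, \dots, s_k$, so that every edge of $G_S$ points from a later vertex to an earlier one; concretely, arrange it so that for $i = s_t$, the only vertices of $S$ that can lie in $K_i$ are among $\{s_1, \dots, s_{t-1}\}$. Now consider the rows and columns of $\p{A}$ indexed by $S$: since $\p{A}$ fits $G$, we have $\p{A}_{j,i} = 0$ whenever $j \notin K_i$, and combining this with the topological order, the submatrix $\p{A}_{S,S}$ is (after the topological relabelling) lower triangular with $1$'s on the diagonal, hence of full rank $k$. Therefore the columns $\{\p{a}_i^T : i \in S\}$ of $\p{A}$ are linearly independent. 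Because these $k$ vectors lie in the column space of $\p{B}$ and are independent, they can be extended to a basis of the column space of $\p{A}$ (equivalently of $\p{B}$'s column space, which contains it and has the same dimension as we may take $\p{B}$ to have exactly $\minrk_q(G)$ columns — or more generally we just need a spanning set). Let $\p{B}''$ be the matrix whose columns are these $k$ vectors together with enough further columns of $\p{A}$ (or of $\p{B}$) to span the column space of $\p{A}$. Then $\p{B}''$ is again a valid scalar linear encoder for $G$: its column space contains that of $\p{A}$, which fits $G$. If necessary, pad $\p{B}''$ with zero columns so that it has exactly $\ell$ columns; appending zero columns does not change the column space, so validity and codelength $\ell$ are preserved. (The padding is harmless; alternatively, if $\p{B}$ already has $\ell > \minrk_q(G)$ columns, replace an appropriate subset of its columns.)

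With this $\p{B}''$ in hand, for each $i \in S$ the vector $\p{a}_i^T$ is literally one of the columns of $\p{B}''$, so we may take $\p{d}_i$ to be the corresponding standard basis vector, giving $\wt(\p{d}_i) = 1$ and hence $r_i = 1$; the receiver then reads off the single coded symbol $c_{\text{(that column)}}$, computes $\p{c}\p{d}_i^T = \p{x}\p{a}_i^T = x_i + \sum_{j \in K_i} x_j \p{A}_{j,i}$, and subtracts the side-information term to recover $x_i$. For the receivers $i \notin S$, decoding proceeds exactly as in the standard scheme using any $\p{d}_i$ with $\p{a}_i^T = \p{B}''\p{d}_i^T$; their locality is irrelevant to the claim. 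Thus $(\p{B}'', \p{d}_1, \dots, \p{d}_N)$ is a valid scalar linear index code for $G$ with codelength $\ell$ in which every receiver in $S$ has locality $1$, as required. I expect the only point needing care is the bookkeeping around column counts — ensuring that "same codelength $\ell$" is genuinely maintained — which is handled by the zero-padding remark above; the structural heart of the argument is simply that acyclicity of $G_S$ forces the triangular (hence full-rank) structure of $\p{A}_{S,S}$.
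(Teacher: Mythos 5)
Your proposal is correct and follows essentially the same route as the paper: use a topological order of the acyclic induced subgraph to show the $S$-indexed submatrix of the fitting matrix $\p{A}$ is triangular with unit diagonal, conclude that the columns $\{\p{a}_i^T : i \in S\}$ are linearly independent, and then build a new length-$\ell$ encoder containing these columns so each receiver in $S$ decodes from a single coded symbol. The only (immaterial) difference is bookkeeping: the paper fills the remaining $\ell-|S|$ columns from $\p{B}$ so the column space stays equal to that of $\p{B}$, while you complete to a spanning set of the column space of $\p{A}$ and zero-pad to length $\ell$, which is equally valid.
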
 
\begin{proof}
Let $\p{B} \in \Fb_q^{N \times \ell}$ be a valid encoding matrix. Then there exists a matrix \mbox{$\p{A} =[\,\p{a}_1^T~\cdots~\p{a}_N^T\,] \in \Fb_q^{N \times N}$} that fits $G$ and whose column space is contained in the column space of $\p{B}$. Since $G_S$ is directed acyclic, there exists a \emph{topological ordering} $i_1,\dots,i_{|S|}$ of its vertex set $S=\{i_1,\dots,i_{|S|}\}$, i.e., for any $1 \leq a < b \leq |S|$, there exist no directed edge $(i_b,i_a)$ in $G_S$, and hence, $G$ does not contain the edge $(i_b,i_a)$. It follows that for any choice of $1 \leq a < b \leq |S|$, $i_a \notin K_{i_b}$, and hence, the entry of $\p{A}$ at $i_a^{\text{th}}$ row and $i_b^{\text{th}}$ column is \mbox{$\p{A}_{i_a,i_b}=0$}. Further, for any $1 \leq a \leq |S|$, $\p{A}_{i_a,i_a}=1$ since the diagonal entries of $\p{A}$ are equal to $1$.
Let $\p{E}$ be the $|S| \times |S|$ square submatrix of $\p{A}$ composed of the rows and columns indexed by $S$. It follows that if the rows and columns of $\p{E}$ are permuted according to the topological ordering $i_1,\dots,i_{|S|}$, then $\p{E}$ is lower triangular and all the entries on its main diagonal are equal to $1$. Thus $\p{E}$ is a full-rank matrix, and hence, the columns $\p{a}_{i_1}^T,\dots,\p{a}_{i_{|S|}}^T$ of $\p{A}$ are linearly independent.

Consider a matrix $\p{B}' \in \Fb_q^{N \times \ell}$ constructed as follows. Let the first $|S|$ columns of $\p{B}'$ be $\p{a}_{i_1}^T,\dots,\p{a}_{i_{|S|}}^T$. The remaining $\ell-|S|$ columns of $\p{B}'$ are chosen from among the columns of $\p{B}$ such that the column spaces of $\p{B}'$ and $\p{B}$ are identical. This is possible since $\p{a}_{i_1}^T,\dots,\p{a}_{i_{|S|}}^T$ are linearly independent and are contained in the column space of $\p{B}$.
By construction, the column space of $\p{B}'$ contains the column space of $\p{A}$, and hence, $\p{B}'$ is a valid scalar linear index coding matrix for $G$. Also, for any $i \in S$, the $i^{\text{th}}$ column of $\p{A}$ is a column of $\p{B}'$, and hence, the locality $r_i$ of the $i^{\text{th}}$ receiver is equal to $1$. This completes the proof.
\end{proof}

\begin{definition} 
A set of $M$ subsets $S_1,\dots,S_M \subset [N]$ of the vertex set of the side information graph $G$ is a \emph{$Q$-fold acyclic induced subgraph (AIS) cover} of $G$ if \emph{(i)}~$S_1 \cup \cdots \cup S_M = [N]$, \emph{(ii)}~each $i \in [N]$ is an element of at least $Q$ of the $M$ subsets $S_1,\dots,S_M$, and \emph{(iii)}~all the $M$ induced subgraphs $G_{S_1},\dots,G_{S_M}$ are acyclic.
\end{definition}

Given an AIS cover $S_1,\dots,S_M$ of $G$ and a scalar linear index code of length $\ell$ for $G$, we construct a vector linear code as follows. From Lemma~\ref{lem:ais_lemma}, we know that for each $j \in [M]$, there exists a valid scalar linear encoding matrix $\p{B}_j$ with codelength $\ell$ such that the locality of every receiver $i \in S_j$ is $1$. Consider a vector linear index code that encodes $M$ independent instances of the scalar messages $x_1,\dots,x_N \in \Fb_q$ using the encoding matrices $\p{B}_1,\dots,\p{B}_M$, respectively. The broadcast rate of this scheme is $\ell$. 
If $S_1,\dots,S_M$ is a $Q$-fold AIS cover of $G$, for each \mbox{$i \in [N]$}, there exist $Q$ scalar linear encoders among $\p{B}_1,\dots,\p{B}_M$ that provide locality $1$ at the $i^{\text{th}}$ receiver. The locality provided by the remaining $(M-Q)$ encoders is at most $\ell$ at this receiver. Thus the number of encoded symbols queried by any receiver in the vector linear coding scheme is at the most $Q + (M-Q)\ell$. Normalizing this by the number of message instances $M$, we observe that the locality of this scheme is at the most $\frac{Q + (M-Q)\ell}{M}$. Thus, we have proved 
\begin{theorem}~\label{thm:ais_cover}
If there exists a $Q$-fold AIS cover of $G$ consisting of $M$ subsets of its vertex set, and if there exists a scalar linear index code of length $\ell$ for $G$, then there exists a vector linear code for $G$ with broadcast rate $\ell$, message length $m=M$, and locality at the most $({Q + (M-Q)\ell})/{M}$, and hence,
\begin{equation*}
\beta_G^*\left( \frac{Q + (M-Q)\ell}{M} \right) \leq \ell.
\end{equation*}  
\end{theorem}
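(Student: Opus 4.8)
The plan is to construct the claimed vector linear code explicitly from the given AIS cover together with the given scalar linear code, and then bound the locality of each receiver by a simple counting argument; all the real work is already packaged in Lemma~\ref{lem:ais_lemma}.

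First I would fix a $Q$-fold AIS cover $S_1,\dots,S_M$ of $G$ and a valid scalar linear encoding matrix $\p{B}\in\Fb_q^{N\times\ell}$ for $G$. For each $j\in[M]$, the induced subgraph $G_{S_j}$ is directed acyclic by definition of an AIS cover, so Lemma~\ref{lem:ais_lemma} furnishes a valid scalar linear encoding matrix $\p{B}_j\in\Fb_q^{N\times\ell}$ of codelength $\ell$ under which every receiver $i\in S_j$ has locality $r_i=1$; for $i\notin S_j$ we only keep the naive bound $r_i\le\ell$ under $\p{B}_j$.

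Next I would define a vector linear code of message length $m=M$ by taking $M$ independent instances of the scalar messages. Writing the message of user $i$ as $\xb_i=(x_i^{(1)},\dots,x_i^{(M)})\in\Fb_q^M$ and $\xb^{(j)}=(x_1^{(j)},\dots,x_N^{(j)})$, let the $j$-th block of the codeword be $\cb^{(j)}=\xb^{(j)}\p{B}_j$, and let the transmitted codeword be the concatenation $\cb=(\cb^{(1)},\dots,\cb^{(M)})$, of length $M\ell$. This is a valid index code: for each $i$ and each $j$, receiver $i$ applies the decoder associated with $\p{B}_j$ to the block $\cb^{(j)}$ together with its side information $(x_k^{(j)},\,k\in K_i)$ to recover $x_i^{(j)}$, and this works for every $j$ irrespective of whether $i\in S_j$ — only the number of symbols queried from $\cb^{(j)}$ changes. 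The broadcast rate is $\beta=M\ell/M=\ell$.

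Finally I would bound the locality. Fix a receiver $i$. Since the cover is $Q$-fold, $i$ belongs to at least $Q$ of the sets $S_1,\dots,S_M$; for each such $j$ the decoder queries exactly one symbol of $\cb^{(j)}$, and for each of the remaining at most $M-Q$ blocks it queries at most $\ell$ symbols, so $|R_i|\le Q\cdot 1+(M-Q)\ell$ and $r_i=|R_i|/M\le (Q+(M-Q)\ell)/M$. Hence the overall locality $r=\max_i r_i$ is at most $(Q+(M-Q)\ell)/M$, and since we have exhibited a valid index code for $G$ of broadcast rate $\ell$ with locality at most this value, the definition of $\beta_G^*$ yields $\beta_G^*\big((Q+(M-Q)\ell)/M\big)\le\ell$ (monotonicity of $\beta_G^*$ from Lemma~\ref{lem:convexity} then extends the bound to all larger localities). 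The only point needing care is the verification that the concatenated scheme is genuinely valid — i.e., that decodability within block $j$ does not require $i\in S_j$, and that the side information consumed there is exactly the $j$-th instance $(x_k^{(j)},\,k\in K_i)$, which receiver $i$ indeed possesses — but this is immediate from the fact that each $\p{B}_j$ is by itself a valid scalar linear encoder for $G$.
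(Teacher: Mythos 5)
Your proposal is correct and matches the paper's own argument: invoke Lemma~\ref{lem:ais_lemma} on each acyclic induced subgraph $G_{S_j}$ to obtain encoders $\p{B}_j$ of length $\ell$ giving locality $1$ to receivers in $S_j$, time-share $M$ independent message instances, and count $|R_i|\le Q+(M-Q)\ell$ per receiver. Your added verification that the concatenated scheme remains a valid index code is a welcome explicit detail, but the route is the same as in the paper.
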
 

As an application of Theorem~\ref{thm:ais_cover}, consider the following coding scheme. Let the parameter \mbox{$t \geq 1$} be such that the side information graph $G$ contains no cycles of length $t$ or less. With $M=\binom{N}{t}$, let $S_1,\dots,S_M$ be the set of all subsets of $[N]$ of size $t$. The subgraph of $G$ induced by $S_j$, for any $j \in [M]$, is acyclic since $|S_j|=t$. Further, each $i \in [M]$ is an element of 
$Q = \binom{N-1}{t-1}$ 
of the $M$ subsets $S_1,\dots,S_M$. Hence the resulting locality is $({Q + (M-Q)\ell})/{M}$ which can easily be shown to be equal to $(t+(N-t)\ell)/{N}$. Hence, we have 

\begin{corollary}~\label{cor:ais_cover_t}
If $G$ contains no cycles of length $t$ or less, and if there exists a scalar linear index code of length $\ell$ for $G$, then we can achieve broadcast rate $\ell$ with locality $(t+(N-t)\ell)/{N}$.
\end{corollary}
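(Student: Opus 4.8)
The plan is to obtain this as a direct specialization of Theorem~\ref{thm:ais_cover}, using one particular $Q$-fold AIS cover: the collection of \emph{all} $t$-element subsets $S_1,\dots,S_M$ of $[N]$, so that $M=\binom{N}{t}$. First I would verify that this collection really is a valid AIS cover of $G$. Covering is immediate: since $t\geq 1$, every vertex lies in at least one $t$-subset, so $S_1\cup\cdots\cup S_M=[N]$. Acyclicity of each $G_{S_j}$ follows from the hypothesis: an induced subgraph on $t$ vertices can contain a cycle only if that cycle has length at most $t$, and by assumption $G$ has no such cycle, so every $G_{S_j}$ is a DAG. Finally, a one-line counting argument shows each $i\in[N]$ lies in exactly $\binom{N-1}{t-1}$ of the subsets (fix $i$, then choose the remaining $t-1$ members from the other $N-1$ vertices), so the cover is $Q$-fold with $Q=\binom{N-1}{t-1}$.

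Next I would plug $M=\binom{N}{t}$, $Q=\binom{N-1}{t-1}$, and the given scalar linear code length $\ell$ into Theorem~\ref{thm:ais_cover}, which yields a vector linear code with message length $m=M$, broadcast rate $\ell$, and locality at most $(Q+(M-Q)\ell)/M$. The remaining step is to simplify that fraction. Using the factorial cancellation $Q/M=\binom{N-1}{t-1}/\binom{N}{t}=t/N$, we get $(M-Q)/M=(N-t)/N$, and hence $(Q+(M-Q)\ell)/M=(t+(N-t)\ell)/N$, which is exactly the claimed locality. This completes the derivation.

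I do not expect a genuine obstacle here, since the corollary is a straightforward instantiation of the already-established Theorem~\ref{thm:ais_cover}. The only point deserving a moment's care is the acyclicity check: one must observe that the condition ``no cycles of length $t$ or less'' is precisely what is needed (a $t$-vertex induced subgraph cannot host a cycle of length exceeding $t$), not merely the weaker-sounding ``no cycles of length exactly $t$.'' The arithmetic identity $Q/M=t/N$ is routine but worth recording explicitly so the reader can match the resulting bound against the statement of Theorem~\ref{thm:ais_cover}.
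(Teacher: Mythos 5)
Your proposal is correct and follows essentially the same route as the paper: instantiate Theorem~\ref{thm:ais_cover} with the cover consisting of all $\binom{N}{t}$ subsets of size $t$, note each induced subgraph is acyclic because $G$ has no cycles of length $t$ or less, take $Q=\binom{N-1}{t-1}$, and simplify $(Q+(M-Q)\ell)/M$ to $(t+(N-t)\ell)/N$ via $Q/M=t/N$. Your version merely spells out the covering check and the arithmetic that the paper leaves implicit.
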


Using Corollary~\ref{cor:ais_cover_t} we can immediately prove the following results.

\begin{lemma} \label{lem:arbitrary_locality}
Let $G$ be any unicast index coding problem with a valid scalar linear index code of length $\ell$. Then there exists a vector linear index code for $G$ with message length $m=N$, broadcast rate $\beta=\ell$ and locality $r=(1+(N-1)\ell)/{N}$.
\end{lemma}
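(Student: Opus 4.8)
The plan is to obtain this as the $t=1$ instance of Corollary~\ref{cor:ais_cover_t}. First I would check that the hypothesis of that corollary is met with $t=1$: a ``cycle of length $1$'' is a self-loop $(i,i)$, which would require $i \in K_i$; since the standing assumption in Section~\ref{sec:defn} is that $i \notin K_i$ for every $i$, the graph $G$ contains no cycles of length $1$ or less. Hence Corollary~\ref{cor:ais_cover_t} applies with $t=1$ and, given a valid scalar linear index code of length $\ell$, yields a scheme of broadcast rate $\ell$ and locality $(t+(N-t)\ell)/N = (1+(N-1)\ell)/N$.

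Next I would pin down the message length, which the statement of Corollary~\ref{cor:ais_cover_t} suppresses but Theorem~\ref{thm:ais_cover} records. The AIS cover used in the construction preceding the corollary consists of the $M=\binom{N}{t}$ subsets of $[N]$ of size $t$; for $t=1$ these are the $N$ singletons $\{1\},\dots,\{N\}$, each inducing the trivial (edgeless, hence acyclic) one-vertex subgraph, and each $i\in[N]$ lies in exactly $Q=\binom{N-1}{t-1}=\binom{N-1}{0}=1$ of them. Thus $M=N$, $Q=1$, the message length is $m=M=N$, the broadcast rate is $\ell$, and the locality is $(Q+(M-Q)\ell)/M=(1+(N-1)\ell)/N$, exactly as claimed.

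There is essentially no serious obstacle here; the only thing worth double-checking is that the general construction behind Corollary~\ref{cor:ais_cover_t} degrades gracefully at the boundary value $t=1$ — i.e.\ that a single-vertex induced subgraph is legitimately acyclic (it is, vacuously) and that the incidence count $Q=\binom{N-1}{t-1}$ remains correct for $t=1$ (it gives $Q=1$, since each index belongs to precisely one singleton). Both are immediate, so the proof reduces to invoking Corollary~\ref{cor:ais_cover_t} (equivalently, Theorem~\ref{thm:ais_cover} with the singleton AIS cover) and substituting $t=1$, $M=N$, $Q=1$.
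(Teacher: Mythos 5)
Your proposal is correct and follows exactly the paper's route: the paper proves this lemma by invoking Corollary~\ref{cor:ais_cover_t} with $t=1$, which is what you do, and your extra checks (no self-loops since $i \notin K_i$, the singleton AIS cover giving $M=N$, $Q=1$, hence $m=N$) just make explicit what the paper leaves implicit.
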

\begin{proof}
Use \mbox{$t=1$} in Corollary~\ref{cor:ais_cover_t}. 
\end{proof}

\begin{lemma} \label{lem:cycles_locality}
If $G$ is a directed cycle of length $N$, then there exists a vector linear coding scheme for $G$ such that the length of each message vector is $m=N$, broadcast rate $\beta=N-1$ and locality $r=2(N-1)/N$.
\end{lemma}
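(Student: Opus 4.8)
The plan is to apply Corollary~\ref{cor:ais_cover_t} with the right choice of $t$ to the directed $N$-cycle, after first supplying a scalar linear index code of the appropriate length. For a directed cycle on $N$ vertices, the shortest cycle has length $N$, so $G$ contains no cycles of length $t$ or less for any $t \leq N-1$; in particular I would take $t = 2$. Separately I need a scalar linear index code of length $\ell = N-1$ for the $N$-cycle: this is the standard "save one transmission" code, e.g.\ the encoder that transmits $x_1+x_2,\ x_2+x_3,\ \dots,\ x_{N-1}+x_N$ (equivalently, a fitting matrix $\p{A}$ of rank $N-1$, which is $\kappa_q(G) = N-1$ for the directed cycle). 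Each receiver $(i, \{i+1\})$ can peel off $x_{i+1}$ from its side information to recover $x_i$, so this is valid with $\beta = \ell = N-1$.

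With $t=2$ and $\ell = N-1$ in hand, Corollary~\ref{cor:ais_cover_t} immediately gives a vector linear code for $G$ with message length $m = N$, broadcast rate $\beta = N-1$, and locality
\begin{equation*}
r = \frac{t + (N-t)\ell}{N} = \frac{2 + (N-2)(N-1)}{N}.
\end{equation*}
The remaining task is purely arithmetic: I would verify that $2 + (N-2)(N-1) = 2(N-1)$, i.e.\ that $(N-2)(N-1) = 2(N-1) - 2 = 2N - 4 = 2(N-2)$, which holds since $(N-1) = 2$... wait — this only works for $N=4$. So I must instead observe that the claimed locality $2(N-1)/N$ does \emph{not} come from $t=2$; rather it comes from using a \emph{different, more refined AIS cover} than the "all $t$-subsets" cover, or from $t$ chosen so the formula collapses. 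Let me re-read: $2(N-1)/N$ would arise from the general Theorem~\ref{thm:ais_cover} bound $(Q + (M-Q)\ell)/M$ with $M = N$, $\ell = N-1$, and $Q = N$ forced... no. The clean route: take the AIS cover consisting of the $M = N$ "path" subsets $S_j = [N] \setminus \{j\}$ for $j \in [N]$. Each $G_{S_j}$ is a directed path (the cycle minus one vertex), hence acyclic, and each $i \in [N]$ lies in exactly $Q = N-1$ of these subsets. Then Theorem~\ref{thm:ais_cover} gives locality at most
\begin{equation*}
\frac{Q + (M-Q)\ell}{M} = \frac{(N-1) + 1 \cdot (N-1)}{N} = \frac{2(N-1)}{N},
\end{equation*}
with broadcast rate $\ell = N-1$ and message length $m = M = N$, exactly as claimed.

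So the proof I would write is: (1) exhibit the length-$(N-1)$ scalar linear index code for the directed $N$-cycle (state the encoder and the one-line decoding check); (2) exhibit the $(N-1)$-fold AIS cover $\{[N]\setminus\{j\} : j \in [N]\}$, noting each induced subgraph is a directed path and each vertex is covered $N-1$ times; (3) invoke Theorem~\ref{thm:ais_cover} with $M = N$, $Q = N-1$, $\ell = N-1$ to conclude. The one point that needs a sentence of care is step (2)'s claim that the cycle minus a vertex is acyclic — immediate, since removing any vertex of a single directed cycle destroys the unique cycle — and the arithmetic simplification in step (3), which is the trivial identity $(N-1) + (N-1) = 2(N-1)$. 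I expect no real obstacle here; the only thing to get right is choosing the path-complement cover rather than the generic $t$-subset cover of Corollary~\ref{cor:ais_cover_t}, since the latter with $t=2$ gives the worse bound $(2 + (N-2)(N-1))/N$ for $N > 4$.
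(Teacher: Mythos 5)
Your proposal is correct and ends up taking essentially the paper's route: the paper simply invokes Corollary~\ref{cor:ais_cover_t} with $t=N-1$ (valid since a directed $N$-cycle has no cycles of length $N-1$ or less) and $\ell=\kappa_q(G)=N-1$, which gives locality $\bigl((N-1)+1\cdot(N-1)\bigr)/N = 2(N-1)/N$. Note that the all-$(N-1)$-subsets cover used there is exactly your path-complement cover $\{[N]\setminus\{j\}\}$, so your ``more refined'' cover and the corollary with $t=N-1$ coincide; only your initial choice $t=2$ was the wrong parameter, and you corrected it.
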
 
\begin{proof}
We know that the optimal scalar linear index code for $G$ has length \mbox{$\ell=\kappa_q(G)=N-1$}. Also, with \mbox{$t=N-1$}, $G$ contains no cycles of length $t$ or less. The lemma follows by using Corollary~\ref{cor:ais_cover_t}.  
\end{proof}

When the side information graph is a cycle the optimal broadcast rate $\betaopt=N-1$. Lemma~\ref{lem:cycles_locality} shows that this optimal rate can be achieved with a locality of $2(N-1)/N$, which is strictly less than $2$.

\begin{example} \label{ex:3_cycle_achiev}
Let $G=(\Vc,\Ec)$ be the directed cycle of length $N=3$, i.e, $\Vc=\{1,2,3\}$ and $\Ec=\{(1,2), (2,3), (3,1) \}$. 
From Lemma~\ref{lem:cycles_locality}, $\beta_G^*(4/3) \leq 2$. 

Also, it is well known that the optimal scalar linear index code with $\ell=\kappa_q=2$ is also optimal among all possible index codes for this $G$, i.e., $\betaopt=2$. 
From~\eqref{eq:betaG_simple_2}, $\beta_G^*(2)=2$.
Since $\beta_G^*$ is a non-increasing function, we have
$2 = \beta_G^*(2) \leq \beta_G^*(4/3) \leq 2$,
and hence, $\beta_G^*(4/3)=2$.
\end{example}

\subsection{Codes for symmetric side information problems} \label{sec:minrank_const}

We will now construct vector linear index codes for side information graphs $G$ that satisfy a symmetry property. 
Consider the permutation $\sigma$ on the set $[N]$ that maps $i \in [N]$ to $\sigma(i)=i \mod N + 1$. In this subsection we will assume $G$ to be any directed graph with vertex set $[N]$ such that $\sigma$ is an automorphism of $G$. 
Such unicast index coding problems have been considered before, see~\cite{MCJ_IT_14}, and are related to topological interference management~\cite{Jaf_IT_14}.
First we require the following definition.

\begin{definition}
We say that a set of $M$ matrices $\p{B}_1,\dots,\p{B}_M \in \Fb_q^{N \times \ell}$ is \emph{$(M,Q)$-balanced for $G$} for some integer $Q$ if
\begin{enumerate}[(i)]
\item there exist $M$ matrices $\p{A}_1,\dots,\p{A}_M \in \Fb_q^{N \times N}$ such that for each $i \in [N]$, $\p{A}_i$ fits $G$ and the column spaces of $\p{A}_i$ and $\p{B}_i$ are identical; and
\item for each $i \in [N]$, there exist at least $Q$ distinct indices $j \in [N]$ such that the $i^{\text{th}}$ column of $\p{A}_j$ is a column of $\p{B}_j$.
\end{enumerate} 
\end{definition}

Observe that if $\p{B}_1,\dots,\p{B}_M$ is an $(M,Q)$-balanced set for $G$, then each of these matrices is a valid scalar linear index code for $G$. Also, from the second part of the definition, for at least $Q$ of these codes the locality of the $i^{\text{th}}$ receiver is $1$. 
Now consider the vector linear index coding scheme that is obtained by time-sharing the $M$ codes corresponding to $\p{B}_1,\dots,\p{B}_M$ with equal time shares. Since all the $M$ scalar index codes have broadcast rate $\ell$, the broadcast rate of the overall time-sharing scheme is also $\ell$. For any receiver $i \in [N]$, there exist at least $Q$ scalar codes for which the locality at this receiver is $1$, and for the remaining $(M-Q)$ codes the locality of this receiver is at the most $\ell$. Hence, for the overall time sharing scheme, the locality of any receiver is at the most 
$\frac{Q + (M-Q)\ell}{M} = \ell - \frac{Q(\ell-1)}{M}$,
which is less than the broadcast rate $\ell$. 
We summarize this result as
\begin{theorem} \label{thm:balanced_const}
If there exists a set of $(M,Q)$-balanced $N \times \ell$ matrices for $G$ then 
$\beta_G^*\left( \frac{Q+(M-Q)\ell}{M} \right) \leq \ell$.
\end{theorem}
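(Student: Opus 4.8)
The plan is to mimic almost verbatim the time-sharing argument already used for the AIS covers in Theorem~\ref{thm:ais_cover}, now invoking the $(M,Q)$-balanced structure in place of the acyclic-subgraph structure. First I would fix a set of $(M,Q)$-balanced matrices $\p{B}_1,\dots,\p{B}_M \in \Fb_q^{N\times\ell}$ together with the accompanying fitting matrices $\p{A}_1,\dots,\p{A}_M$ guaranteed by part~(i) of the definition. Since the column space of each $\p{A}_j$ is contained in (indeed equals) that of $\p{B}_j$, each $\p{B}_j$ is, by the criterion recalled from \cite[Remark~4.6]{DSC_IT_12}, a valid scalar linear encoder matrix for $G$ with codeword length $\ell$. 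Next I would build the vector linear code of message length $m=M$ over $\Fb_q$: write the $M$ independent message instances as $\xb^{(1)},\dots,\xb^{(M)} \in \Fb_q^N$ and let the codeword be the concatenation $\cb=(\xb^{(1)}\p{B}_1,\dots,\xb^{(M)}\p{B}_M)$, i.e.\ time-share the $M$ scalar codes with equal shares. The broadcast rate is $\beta = M\ell / M = \ell$.

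The core of the argument is the locality bound. Fix any receiver $i\in[N]$. By part~(ii) of the $(M,Q)$-balanced definition there are at least $Q$ indices $j$ for which the $i$th column of $\p{A}_j$ is literally a column of $\p{B}_j$; for each such $j$ the decoding vector $\p{d}_i$ in the $j$th block can be taken to be a standard basis vector, so receiver $i$ queries exactly one codeword symbol from that block, contributing locality $1$ per block. For each of the remaining $M-Q$ blocks, the worst case is the naive bound that the receiver may need to query all $\ell$ symbols of that block. Summing, receiver $i$ queries at most $Q\cdot 1 + (M-Q)\cdot\ell$ codeword symbols over the whole length-$M\ell$ codeword; hence
\begin{equation*}
r_i = \frac{Q + (M-Q)\ell}{M},
\end{equation*}
and since this bound does not depend on $i$, the overall locality satisfies $r \le (Q+(M-Q)\ell)/M$. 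Because this is a valid index code for $G$ with message length $m=M$, broadcast rate $\ell$, and locality at most $(Q+(M-Q)\ell)/M$, the definition of $\beta_G^*$ as an infimum over all message lengths and all valid codes of locality at most $r$ immediately yields $\beta_G^*\bigl((Q+(M-Q)\ell)/M\bigr)\le\ell$, which is the claim.

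There is essentially no obstacle here: the theorem is a packaging statement, and the only things to be careful about are (a) that part~(i) really does certify each $\p{B}_j$ as a valid scalar encoder — this is exactly the column-space containment condition from the preliminaries — and (b) that the ``standard basis vector'' choice of $\p{d}_i$ in the favourable blocks is legitimate, which holds precisely because the $i$th column of $\p{A}_j$ appears among the columns of $\p{B}_j$, so $\p{a}_i^T = \p{B}_j\p{d}_i^T$ is solvable with $\wt(\p{d}_i)=1$ (this is the remark following the minrank preliminaries). Everything else — the rate computation and the normalisation of the query count by $M$ — is the same routine arithmetic that appears in the AIS-cover construction, and the identity $(Q+(M-Q)\ell)/M = \ell - Q(\ell-1)/M$ stated just before the theorem is a one-line check that I would not belabour. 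I note that the hypothesis that $\sigma$ is an automorphism of $G$ is not actually needed for this particular statement; it is the structural assumption under which one later constructs explicit $(M,Q)$-balanced families, so I would leave it in the ambient discussion but not invoke it in the proof.
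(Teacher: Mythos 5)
Your proposal is correct and follows essentially the same route as the paper: time-share the $M$ scalar codes with equal shares, note that part~(ii) of the definition guarantees at least $Q$ blocks with per-block locality $1$ at each receiver (via a weight-one decoding vector) and bound the rest by $\ell$, then normalize by $M$. The only cosmetic slip is writing $r_i = (Q+(M-Q)\ell)/M$ where an inequality $r_i \le (Q+(M-Q)\ell)/M$ is meant; your observation that the automorphism hypothesis on $\sigma$ is not needed for this statement is also accurate.
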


Based on the above theorem we derive the following result which holds for any index coding problem with symmetric side information graph $G$.

\begin{theorem} \label{thm:cyclic_const}
If the cyclic permutation $\sigma$ is an automorphism of $G$ and if $\kappa_q$ is the minrank of $G$ over $\Fb_q$, then
\begin{equation*}
 \beta_G^*\left( \frac{\kappa_q(N-\kappa_q+1)}{N} \right) \leq \kappa_q.
\end{equation*} 
\end{theorem}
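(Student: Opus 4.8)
The plan is to invoke Theorem~\ref{thm:balanced_const} with $\ell = \kappa_q$ and an appropriate choice of $M$ and $Q$, so the entire task reduces to constructing an $(M,Q)$-balanced set of $N \times \kappa_q$ matrices for $G$ with $Q/M$ as large as possible; the target inequality wants $Q/M = \kappa_q/N$, so I would aim for $M = N$ and $Q = \kappa_q$. First I would fix a rank-$\kappa_q$ matrix $\p{A}$ that fits $G$ and achieves the minrank, and let $\p{B}_1$ be a matrix whose columns form a basis of the column space of $\p{A}$ (an optimal scalar linear encoder). The key idea is to exploit the automorphism $\sigma$: since $\sigma$ is an automorphism of $G$, conjugating $\p{A}$ by the permutation matrix $\p{P}$ of $\sigma$ (i.e.\ forming $\p{P}^T \p{A}\, \p{P}$, or the appropriate one-sided relabelling of rows and columns) again yields a matrix that fits $G$ and has the same rank $\kappa_q$. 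Iterating, I would define $\p{A}_j$, for $j \in [N]$, to be the matrix obtained from $\p{A}$ by applying $\sigma^{j}$ to its row/column indices, and $\p{B}_j$ a basis matrix for the column space of $\p{A}_j$; this gives $M = N$ matrices, each a valid scalar linear encoder, establishing part~(i) of the $(M,Q)$-balanced definition.

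The crux is part~(ii): I must show that for each receiver $i \in [N]$, the $i^{\text{th}}$ column of $\p{A}_j$ is literally one of the columns of $\p{B}_j$ for at least $Q = \kappa_q$ values of $j$. The mechanism is that whenever the $i^{\text{th}}$ column of $\p{A}_j$ lies in a designated basis, locality $1$ is attained (as explained in the preamble to Lemma~\ref{lem:ais_lemma}). The natural way to arrange this: in the base matrix $\p{A}$, since it has rank $\kappa_q$, some set of $\kappa_q$ of its columns forms a basis of its column space; call the index set of such a basis $T \subset [N]$ with $|T| = \kappa_q$. Take $\p{B}_1$ to be exactly those $\kappa_q$ columns. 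Then for the relabelled copy $\p{A}_j$, the translated set $\sigma^{j}(T)$ indexes a column basis of $\p{A}_j$, and I take $\p{B}_j$ to consist of those columns. Now receiver $i$ gets locality $1$ from code $j$ precisely when $i \in \sigma^{j}(T)$, equivalently when $\sigma^{-j}(i) \in T$. As $j$ ranges over $[N]$, $\sigma^{-j}(i)$ ranges over all of $[N]$ exactly once (since $\sigma$ generates a cyclic group of order $N$ acting regularly), so the number of $j$ with $i \in \sigma^{j}(T)$ is exactly $|T| = \kappa_q$. Hence $Q = \kappa_q$ works, and $\{\p{B}_1,\dots,\p{B}_N\}$ is $(N,\kappa_q)$-balanced for $G$.

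Applying Theorem~\ref{thm:balanced_const} with $M = N$, $Q = \kappa_q$, $\ell = \kappa_q$ then gives
\begin{equation*}
\beta_G^*\!\left( \frac{\kappa_q + (N-\kappa_q)\kappa_q}{N} \right) \leq \kappa_q,
\end{equation*}
and simplifying $\kappa_q + (N-\kappa_q)\kappa_q = \kappa_q\bigl(1 + N - \kappa_q\bigr) = \kappa_q(N - \kappa_q + 1)$ yields exactly the claimed bound $\beta_G^*\bigl(\kappa_q(N-\kappa_q+1)/N\bigr) \leq \kappa_q$.

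The step I expect to be the main obstacle — or at least the one requiring the most care — is verifying cleanly that relabelling the indices of $\p{A}$ by an automorphism $\sigma$ of $G$ again produces a matrix that \emph{fits} $G$ (one must check both the diagonal condition $\p{A}_{i,i}=1$, which is preserved since permutations fix the diagonal as a set, and the support condition $\p{A}_{j,i}=0$ whenever $j \notin K_i$, which is where the automorphism property $j \in K_i \iff \sigma(j) \in K_{\sigma(i)}$ is used), together with the bookkeeping that the chosen column-basis index set translates correctly under $\sigma$ and that the cyclic orbit count gives exactly $\kappa_q$ hits per receiver. Everything else is an immediate substitution into Theorem~\ref{thm:balanced_const}.
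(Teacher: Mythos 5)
Your proposal is correct and follows essentially the same route as the paper: conjugate a rank-$\kappa_q$ fitting matrix $\p{A}$ by powers of the cyclic permutation matrix, take as each encoder the (shifted) basis columns of $\p{A}$, use the regular action of the cyclic group to count exactly $\kappa_q$ codes giving locality $1$ to every receiver, and conclude via Theorem~\ref{thm:balanced_const} that the matrices form an $(N,\kappa_q)$-balanced set. The only cosmetic difference is that you describe $\p{B}_j$ as the columns of $\p{A}_j$ indexed by $\sigma^j(T)$ while the paper writes $\p{B}_j=\p{P}^j\p{B}$, which is the same set of columns.
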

\begin{proof}
Suppose $\p{A} \in \Fb_q^{N \times N}$ fits $G$ is of rank $\ell=\kappa_q(G)$ and let $\p{B}$ be an $N \times \ell$ matrix composed of a set of $\ell$ linearly independent columns of $\p{A}$.
Note that when $\p{B}$ is used as a scalar linear index code there exist $\ell$ receivers with locality $1$ since $\ell$ columns of $\p{A}$ appear as columns of $\p{B}$. 

Let $\p{P}$ be the permutation matrix obtained by cyclically shifting down the rows of the $N \times N$ identity matrix by one position. It is straightforward to verify that $\p{PAP}^T$ fits the graph $\sigma(G)$. 
Since $\sigma$ is an automorphism of $G$, we have $\sigma(G)=G$, and hence, $\p{PAP}^T$ fits $G$. Since the column space of $\p{PB}$ is identical to that of $\p{PAP}^T$, $\p{PB}$ represents a valid scalar linear index code for $G$. 
Using this argument iteratively we deduce that for any $i \in [N]$, the matrix $\p{A}_i=\p{P}^i\p{A}(\p{P}^i)^T$ fits $G$, and the column space of $\p{B}_i=\p{P}^i\p{B}$ is identical to that of $\p{A}_i$.
Further, using a counting argument, we observe that for any $i \in [N]$, there exist $\ell$ distinct values of $j$ such that the $i^{\text{th}}$ column of $\p{A}_j$ is a column of $\p{B}_j$. Hence, $\p{B}_1,\dots,\p{B}_N$ is an $(N,\ell)$-balanced set for $G$, and the statement of this theorem follows from Theorem~\ref{thm:balanced_const}.
\end{proof}

\subsection{Codes from optimal coverings of $G$}

Several index coding schemes in the literature partition the given index coding problem (side information graph $G$) into subproblems (subgraphs of $G$), and apply a pre-defined coding technique on each of these subproblems independently. 
The overall codelength is the sum of the codelengths of the individual subproblems.
The broadcast rate is then reduced by optimizing over all possible partitions of $G$. We will now quickly recall a few such \emph{covering-based} index coding schemes, and then show how they can be modified to guarantee locality.

Let $G$ be the side information graph of any given index coding problem where the side information index set of receiver $i$ is $K_i \subset [N]$. 

\subsubsection*{Partition Multicast}
The \emph{partition multicast} or the \emph{partial clique covering}  scheme uses the transpose of the parity-check matrix of an appropriate maximum distance separable (MDS) code to encode each subgraph of $G$~\cite{BiK_INFOCOM_98,BKL_IT_13,TDN_ISIT_12,ArK_ISIT_14,SDL_ISIT_14}. Specifically let $G_S$ be the subgraph of $G$ induced by the subset of vertices \mbox{$S \subset [N]$}. 
The number of information symbols in the index coding problem $G_S$ is $|S|$, and the side information of receiver \mbox{$i \in S$} in $G_S$ is $K_i \cap S$. 
The partition multicast scheme uses a scalar linear encoder for $G_S$ where the encoding matrix is the transpose of the parity-check matrix of an MDS code of length $|S|$ and dimension $\min_{i \in S}|K_i \cap S|$. This code for $G_S$ encodes messages of length \mbox{$m_S=1$} and has codelength $\ell_S=|S|-\min_{i \in S} |K_i \cap S|$. We will use the trivial value of locality $r_S=\ell_S$ for this coding scheme.
The finite field $\Fb_q$ must be sufficiently large to guarantee that the MDS codes of required blocklength and dimension exist for every possible choice of $S \subset [N]$.

\subsubsection*{Cycle Covering}

The \emph{cycle covering} scheme~\cite{NTZ_IT_13,CASL_ISIT_11} considers subgraphs $G_S$ which form a cycle of length $|S|$.  
If $G_S$ is a directed cycle of length $|S|$, then it is encoded using a scalar linear index code with encoding matrix equal to the transpose of the parity-check matrix of a repetition code of length $|S|$, resulting in message length \mbox{$m_S=1$} and index codelength $\ell_S=|S|-1$. Again, we will use the trivial value of locality $r_S=\ell_S$.
If $G_S$ is not a directed cycle, then the corresponding information symbols are transmitted uncoded resulting in $m_S=1$, $\ell_S=|S|$ and locality $r_S=1$. 

In similar vein to partition multicast and cycle covering schemes, consider the following proposed coding technique that applies the optimal scalar linear index code over each subgraph $G_S$. 

\subsubsection*{Minrank Covering}
Encode each subgraph $G_S$ using its own optimal scalar linear index code. The message length $m_S=1$, codelength $\ell_S$ equals the minrank $\kappa_q(G_S)$ of the subgraph, and the locality $r_S$ equals the codelength $\kappa_q(G_S)$. By partitioning $G$ into subgraphs $G_S$ of small minrank we can achieve a small locality for the overall scheme.

\subsubsection{Scalar-linear codes}
Now consider any covering-based index coding technique (such as partition mutlicast, cycle covering or minrank covering) for $G$. 
Let the scalar linear index code (i.e., message length equal to $1$) associated with the subgraph $G_S$, $S \subset [N]$, have codelength $\ell_S$ and locality $r_S$. 
The overall index code uses a partition of the vertex set $[N]$, which is represented by the tuple $(a_S, S \subset [N])$, where each $a_S \in \{0,1\}$ is such that the partition of $[N]$ consists of all subsets $S$ with $a_S=1$.
Note that $(a_S, S \subset [N])$ represents a partition of $[N]$ if and only if $\sum_{S: i \in S}a_S=1$ for every $i \in [N]$, i.e., every vertex $i$ is contained in exactly one of the subsets of the partition.
The covering-based index coding technique applies an index code of length $\ell_S$ and locality $r_S$ to each subgraph $G_S$ with \mbox{$a_S=1$} independently. Thus the codelength of the overall index code is \mbox{$\ell=\sum_{S: a_S=1}\ell_S=\sum_{S \subset [N]}a_S\ell_S$} and locality is $r=\max_{S:a_S=1}r_S=\max_{S \subset [N]}a_Sr_S$.
By optimizing over all possible partitions of $G$, we have 

\begin{theorem}[Covering with locality] \label{thm:partition_scalar}
Consider a family of scalar linear index codes, one for each $G_S$, $S \subset [N]$, with length $\ell_S$ and locality $r_S$. 
Given any $r \geq 1$, the value of $\beta_G^*(r)$ is upper bounded by the solution to the following integer program
\begin{align*}
&\text{minimize } \sum_{S \subset [N]}a_S\ell_S \text{ subject to} \\
&\sum_{S: i \in S}a_S = 1 ~\forall\, i \in [N], \text{ and }
a_Sr_S \leq r ~\forall\, S \subset [N]\\
&\text{where } a_S \in \{0,1\}.
\end{align*}  
\end{theorem}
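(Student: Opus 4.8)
The plan is to verify that the integer program faithfully encodes the family of covering-based index codes described just above the theorem, and then invoke the definition of $\beta_G^*$ together with the convexity/monotonicity machinery already established. The argument is essentially a bookkeeping exercise: show that every feasible point of the integer program yields a valid index code for $G$ whose broadcast rate equals the objective value and whose locality is at most $r$.

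First I would fix a feasible $(a_S, S \subset [N])$ for the integer program with objective value $\ell := \sum_{S \subset [N]} a_S \ell_S$. The constraint $\sum_{S : i \in S} a_S = 1$ for all $i \in [N]$ says exactly that $\{S : a_S = 1\}$ is a partition of the vertex set $[N]$, so every receiver $i$ lies in precisely one part $S$ of the partition. On each part with $a_S = 1$ we run the given scalar linear index code for $G_S$ (message length $1$, codelength $\ell_S$, locality $r_S$); since the parts partition $[N]$, stacking these codes produces a valid scalar linear index code for $G$ with message length $m = 1$, total codelength $\sum_{S : a_S = 1} \ell_S = \ell$, and — because receiver $i$ only queries the codeword symbols belonging to its own part — locality equal to $\max_{S : a_S = 1} r_S$. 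The second family of constraints, $a_S r_S \leq r$ for all $S$, forces $r_S \leq r$ for every part of the partition (and is vacuous for the parts not used), so the overall locality is at most $r$. Hence this construction gives a valid index code for $G$ with broadcast rate $\ell$ and locality at most $r$, and therefore $\beta_G^*(r) \leq \ell$ by the definition of the optimal broadcast rate function.

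Finally I would take the infimum (here a minimum, since the program has finitely many feasible points) over all feasible $(a_S)$: writing $\ell^*$ for the optimal value of the integer program, the above shows $\beta_G^*(r) \le \ell$ for each feasible objective value $\ell$, hence $\beta_G^*(r) \le \ell^*$, which is the claimed bound. One small point worth addressing is feasibility of the program itself: the trivial partition into singletons $S = \{i\}$ with the uncoded code ($\ell_{\{i\}} = 1$, $r_{\{i\}} = 1$) is always available and satisfies $a_{\{i\}} r_{\{i\}} = 1 \le r$ since $r \ge 1$, so the program is never infeasible and the bound is never vacuous. The only mildly delicate step is the locality accounting — confirming that a receiver in part $S$ genuinely needs to observe no codeword symbols outside the block generated from $G_S$ — but this is immediate from the fact that the blocks are generated independently from disjoint message sets, so no real obstacle arises; the theorem is a direct consequence of the construction preceding it.
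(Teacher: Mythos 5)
Your proposal is correct and follows essentially the same route as the paper, whose proof is the paragraph preceding the theorem: the equality constraints encode a partition of $[N]$, each part $S$ with $a_S=1$ is encoded independently by its scalar linear code, giving overall length $\sum_S a_S\ell_S$ and locality $\max_{S:a_S=1} r_S$, which the constraints $a_Sr_S\leq r$ cap at $r$. Your added remarks on feasibility and on optimizing over the finitely many feasible points are harmless refinements of the same argument.
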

The second constraint \mbox{$a_Sr_S \leq r$} in Theorem~\ref{thm:partition_scalar} ensures that the locality of the resulting coding scheme is at the most $r$. Since $a_S \in \{0,1\}$, this implies that when solving for the optimal partition, the integer program considers only those subsets $S$ with $r_S \leq r$, i.e., locality $r$ is achieved by partitioning $G$ into subproblems of small locality.

\subsubsection{Vector-linear codes}

It is known that the linear programming (LP) relaxation, that allows each $a_S$ to assume values in the interval $[0,1]$ instead of \mbox{$a_S \in \{0,1\}$}, provides an improvement in rate for covering-based index coding schemes by time-sharing over several partitions of $G$. This technique too can be adapted to ensure locality. Consider the following vector linear schemes that can be used to cover a given side information graph $G$.

\subsubsection*{Vector Partition Multicast}

For any $S \subset [N]$, the partition multicast scheme uses a scalar code with length $|S|-\min_{i \in S}|K_i \cap S|$ for the subgraph $G_S$. Applying Lemma~\ref{lem:arbitrary_locality}, we obtain a vector linear code for $G_S$ with message length $m_S=|S|$, broadcast rate $\beta_S=|S|-\min_{i \in S}|K_i \cap S|$, codelength $\ell_S=m_S\beta_S$ and locality $r_S=(1+(|S|-1)\beta_S)/{|S|}$. Note that the locality $r_S$ is strictly less than the rate $\beta_S$.

\subsubsection*{Vector Cycle Covering}

In this scheme, if a subgraph $G_S$ is a directed cycle, we use the vector linear index code promised by Lemma~\ref{lem:cycles_locality}, with message length $m_S=|S|$, broadcast rate $\beta_S=|S|-1$, codelength $\ell_S=m_S\beta_S=|S|(|S|-1)$, and locality $r_S=2(|S|-1)/{|S|}$. If $G_S$ is not a cycle, we use uncoded transmission that results in \mbox{$m_S=1$}, \mbox{$\ell_s=|S|$} and \mbox{$r_S=1$}.

\subsubsection*{Vector Minrank Covering} 

For a given $G_S$ we start with the optimal scalar linear code of length $\kappa_q(G_S)$ and use Lemma~\ref{lem:arbitrary_locality} to obtain a vector linear code with message length \mbox{$m_S=|S|$}, rate \mbox{$\beta_S=\kappa_q(G_S)$}, codelength \mbox{$\ell_S=m_S\beta_S$} and locality $r_S={(1+(|S|-1)\kappa_q(G_S))}/{|S|}$. 

Now consider a family of vector linear coding schemes that encodes each subgraph $G_S$, \mbox{$S \subset [N]$}, using a linear code of codelength $\ell_S$ with locality $r_S$, message length $m_S$ and rate $\beta_S=\ell_S/m_S$.
For some choice of integers $k_S$, \mbox{$S \subset [N]$}, perform time-sharing among all the subgraphs of $G$, by encoding $k_S$ independent realizations of the subgraph $G_S$ for each \mbox{$S \subset [N]$}.
In this scheme, the total number of message symbols intended for receiver \mbox{$i \in [N]$} is $\sum_{S: i \in S}k_Sm_S$. The number of codeword symbols $|R_i|$ queried by receiver $i$ is at the most $\sum_{S: i \in S}k_Sr_Sm_S$. The overall length of this index coding scheme is $\ell=\sum_{S \subset [N]}k_S\ell_S$. 
Suppose we require the message length corresponding to every receiver to be identical,
\begin{equation} \label{eq:vector_linear_eq1}
 \sum_{S: i \in S}k_Sm_S = m \text{ for all } i \in [N].
\end{equation} 
Define \mbox{$a_S=k_Sm_S/m$} for all \mbox{$S \subset [N]$}. Then~\eqref{eq:vector_linear_eq1} is equivalent to $\sum_{S: i \in S}a_S=1$ for every \mbox{$i \in [N]$}. The locality of receiver $i$ is $|R_i|/m$ and is upper bounded by $\sum_{S: i \in S}a_Sr_S$. The broadcast rate is $\beta=\sum_{S \subset [N]}k_S\ell_S/m = \sum_{S \subset [N]}a_S\beta_S$.
By optimizing over all possible choices of the time-sharing parameters \mbox{$(a_S, S \subset [N])$}, we arrive at 

\begin{theorem}[Fractional covering with locality] \label{thm:partition_fractional}
Consider any family of vector linear index codes, one for each subgraph $G_S$, $S \subset [N]$, of $G$ with locality $r_S$ and broadcast rate $\beta_S$. For any $r \geq 1$, $\beta_G^*(r)$ is upper bounded by the solution to the following linear program
\begin{align*}
&\text{minimize } \sum_{S \subset [N]}a_S\beta_S \text{ subject to} \\
&\sum_{S: i \in S}a_S = 1  \text{ and }
 \sum_{S: i \in S}a_Sr_S \leq r ~\forall\, i \in [N]\\
&\text{where } a_S \in [0,1]. 
\end{align*}  
\end{theorem}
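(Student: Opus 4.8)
The plan is to turn the informal time-sharing argument preceding the statement into a rigorous construction; the one genuine gap there is that the linear program ranges over real weights $a_S \in [0,1]$, whereas an actual index code must time-share an \emph{integer} number of copies of each subcode. First I would check that the program is feasible for every $r \ge 1$: setting $a_{\{i\}} = 1$ on each singleton and $a_S = 0$ otherwise uses only the uncoded length-$1$ code on each single-vertex subgraph, for which $m_S = \ell_S = r_S = \beta_S = 1$, and this satisfies $\sum_{S : i \in S} a_S = 1$ and $\sum_{S : i \in S} a_S r_S = 1 \le r$ for all $i$. The feasible region is a bounded rational polytope, so the linear objective attains its minimum at a vertex $(a_S^*)$ with rational coordinates; write $\beta^\star = \sum_{S \subset [N]} a_S^* \beta_S$ for this optimum.

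Next I would fix the message length $m$ to be a positive integer --- e.g.\ a common multiple of the denominators of the rationals $a_S^*/m_S$ over the $S$ with $a_S^* > 0$ --- so that $k_S := a_S^* m / m_S$ is a nonnegative integer for every $S \subset [N]$. Since $\sum_{S : i \in S} a_S^* = 1$, we get $\sum_{S : i \in S} k_S m_S = m$ for every $i$, so each receiver is assigned exactly $m$ message symbols. The code is the concatenation, over all $S$ with $k_S > 0$, of $k_S$ independent realizations of the given length-$\ell_S$ vector linear code for $G_S$; its total length is $\ell = \sum_S k_S \ell_S = \sum_S k_S m_S \beta_S = m \sum_S a_S^* \beta_S = m \beta^\star$ (using $k_S m_S = a_S^* m$ and $\ell_S = m_S \beta_S$), so its broadcast rate is $\ell / m = \beta^\star$. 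It is valid for $G$: a receiver $i$ participates only in subproblems $G_S$ with $i \in S$, its side information $\xb_{K_i \cap S}$ in any such $G_S$ is a sub-vector of $\xb_{K_i}$, and the $k_S$ subcodes over all $S \ni i$ jointly recover all $m$ of its message symbols. For locality: in each realization of $G_S$ with $i \in S$ receiver $i$ queries at most $r_S m_S$ coordinates, so $|R_i| \le \sum_{S : i \in S} k_S r_S m_S$ and the locality of receiver $i$ is $|R_i| / m \le \sum_{S : i \in S} a_S^* r_S \le r$ by the second LP constraint; discarding any coordinate of $\cb$ that no receiver queries only shrinks $\ell$. Thus $G$ has a valid index code with locality at most $r$ and broadcast rate $\beta^\star$, which gives $\beta_G^*(r) \le \beta^\star$, the claimed bound.

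The only delicate point is this integrality step, and it needs no approximation argument: the LP has rational optimal vertices, so one choice of $m$ clears all denominators at once and $\beta^\star$ is attained exactly rather than approached in a limit. Everything else --- feasibility, the rate arithmetic, validity via $K_i \cap S \subseteq K_i$, and the per-receiver locality bound --- is routine bookkeeping, paralleling the scalar covering argument behind Theorem~\ref{thm:partition_scalar}.
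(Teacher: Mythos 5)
Your construction is essentially the paper's own argument --- time-sharing $k_S$ independent realizations of the subgraph codes with $a_S = k_S m_S/m$, together with the same validity, rate, and per-receiver locality bookkeeping --- with the added (and welcome) care of clearing denominators via a rational optimal vertex, a step the paper leaves implicit. The only small caveat is that for irrational $r$ the optimal vertex need not be rational (and your feasibility remark presumes the given family uses uncoded singleton codes, which is not needed since an infeasible program bounds $\beta_G^*(r)$ vacuously), but both points are harmless: approximate by rational feasible points, or by rational $r' < r$, and invoke the infimum in the definition of $\beta_G^*(r)$.
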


\section{Locality-Rate Trade-Off of Directed $3$-Cycle} \label{sec:3cycle}


Let $G$ be the directed $3$-cycle i.e., $N=3$ and the three receivers $(i,K_i)$ are $(1,\{2\})$, $(2,\{3\})$ and $(3,\{1\})$. We will now characterize its locality-rate trade-off given by the optimal broadcast rate function $\beta_G^*(r)$ using the achievability schemes of Section~\ref{sec:fractional_coloring}~and~\ref{sec:ais_cover} and a converse based on information inequalities. 
The objective of this section is to prove 
\begin{theorem} \label{thm:3cycle}
For the unicast index coding problem represented by the directed $3$-cycle $G$, the locality-broadcast rate trade-off 
\begin{equation*} 
 \beta_G^*(r) = \max\{6-3r,2\} \text{ for all } r \geq 1.
\end{equation*} 
\end{theorem}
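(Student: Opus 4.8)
The plan is to establish the two bounds separately: the achievability $\beta_G^*(r) \le \max\{6-3r,\,2\}$ and the converse $\beta_G^*(r) \ge \max\{6-3r,\,2\}$, for all $r \ge 1$. Since $\beta_G^*$ is convex and non-increasing (Lemma~\ref{lem:convexity}), and since $\max\{6-3r,2\}$ is itself convex, it will suffice to pin down the function at the "corner" points and argue by interpolation on each side.

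For achievability, I would collect the three reference points already available in the excerpt. First, $\beta_G^*(1) = \chi_f(\Gcpu) = 3$: the interference graph of the directed $3$-cycle has no edges removed, i.e.\ $\Gcpu$ is the complete graph $K_3$ (since for every pair $\{i,j\}$ at least one of $i \notin K_j$, $j \notin K_i$ holds), so $\chi_f(K_3)=3$; by Theorem~\ref{thm:frac_coloring} this gives the value $3 = 6-3\cdot 1$ at $r=1$. Second, from Example~\ref{ex:3_cycle_achiev} (via Lemma~\ref{lem:cycles_locality} with $N=3$), $\beta_G^*(4/3) \le 2$, and in fact $=2$. Third, $\beta_G^*(r) \le \betaopt = 2$ for all $r \ge 4/3$ because $\beta_G^*$ is non-increasing and bounded below by $\betaopt=2$; this gives the flat part $\beta_G^*(r) = 2$ for $r \ge 4/3$. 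On the segment $1 \le r \le 4/3$, convexity of $\beta_G^*$ together with the two endpoint values $\beta_G^*(1) \le 3$ and $\beta_G^*(4/3) \le 2$ yields $\beta_G^*(r) \le 3 - 3(r-1) = 6-3r$ by taking the chord; alternatively one can exhibit the code directly by time-sharing the uncoded scheme ($\beta=3$, $r=1$) with the length-$N$ cyclic scheme of Lemma~\ref{lem:cycles_locality}, which realizes any point on this chord. Either way, $\beta_G^*(r) \le \max\{6-3r,2\}$ for all $r \ge 1$.

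For the converse, the flat lower bound $\beta_G^*(r) \ge \betaopt = 2$ holds for all $r$ since $\betaopt$ is the infimum broadcast rate over all localities. The substantive part is the sloped bound $\beta_G^*(r) \ge 6-3r$ on $1 \le r \le 4/3$, equivalently $\ell \ge 6m - 3\sum$-type inequalities relating $\ell$, $m$, and the query-set sizes $|R_1|,|R_2|,|R_3|$. I expect this to be the main obstacle. The approach is an entropy argument in the spirit of the $r=1$ converse (Lemmas~\ref{lem:basic_lemma_1}--\ref{lem:r_1_converse}): fix a valid index code with parameters $m$, $\ell$, and $|R_i| = r_i m$, and use the decodability constraints $H(\xb_i \mid \cb_{R_i}, \xb_{K_i}) = 0$ together with independence of the messages. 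For the $3$-cycle, receiver~$1$ knows $\xb_2$ and wants $\xb_1$; receiver~$2$ knows $\xb_3$ and wants $\xb_2$; receiver~$3$ knows $\xb_1$ and wants $\xb_3$. One chains these: starting from $H(\cb) \le \ell$, one writes $H(\cb) \ge H(\cb \mid \xb_2) = H(\cb \mid \xb_2) - H(\cb \mid \xb_1,\xb_2,\xb_3) \ge \dots$ and peels off $H(\xb_1) = m$ using that $\cb_{R_1}$ plus $\xb_2$ determines $\xb_1$, then uses the overlap structure of $R_1, R_2, R_3$ inside $[\ell]$. The key quantitative handle is that a symbol $c_t$ with $t \notin R_i$ is, conditioned on all messages outside a suitable set, still "fresh", as in Lemma~\ref{lem:r_1_R_non_intersecting}; summing the resulting constraints over the three cyclically-symmetric receivers and accounting for how many codeword coordinates can lie in one, two, or all three of the $R_i$'s should produce $\ell + (\text{something like } \sum_i |R_i|) \ge 6m$, which rearranges to $\beta + 3r \ge 6$ after using $|R_i| \le rm$. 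The delicate bookkeeping — getting exactly the coefficient $3$ on $r$ and $1$ on $\beta$, rather than a weaker constant — is where care is needed; I would set it up by a single application of the chain rule that expands $H(\cb_{R_1 \cup R_2 \cup R_3})$ along the cyclic order of the messages and bounds each conditional term by $m$ or $0$ as dictated by Lemma~\ref{lem:basic_lemma_1}'s analogues (which hold here for general $r$, not just $r=1$, in the one-directional forms $H(\cb_{R_i}\mid \xb_i,\xb_{K_i})=0$).

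Finally I would assemble the pieces: on $1 \le r \le 4/3$ the converse gives $\beta_G^*(r) \ge 6-3r \ge 2$, matching the achievability chord, so $\beta_G^*(r) = 6-3r$; on $r \ge 4/3$ both bounds give $2$, so $\beta_G^*(r) = 2$; combining, $\beta_G^*(r) = \max\{6-3r,2\}$ for all $r \ge 1$, which is the claim of Theorem~\ref{thm:3cycle}. The only genuinely new work beyond quoting earlier results is the entropy converse for the sloped regime; everything else is interpolation and bookkeeping.
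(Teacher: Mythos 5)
Your achievability argument is exactly the paper's: the points $\beta_G^*(1)=\chi_f(\Gcpu)=3$ and $\beta_G^*(4/3)=2$, plus convexity and monotonicity from Lemma~\ref{lem:convexity}, give $\beta_G^*(r)\leq\max\{6-3r,2\}$, and the flat lower bound $\beta_G^*(r)\geq\betaopt=2$ is also fine. The gap is the sloped converse $\beta_G^*(r)\geq 6-3r$, which is the only substantive content of the theorem and which you do not actually prove: you describe a chain-rule expansion that ``should produce'' an inequality of the form $\beta+3r\geq 6$ and explicitly defer the bookkeeping. Worse, the premise you propose to build it on is false for $r>1$: you assert that the analogue $H(\cb_{R_i}\mid\xb_i,\xb_{K_i})=0$ of Lemma~\ref{lem:basic_lemma_1}(ii) ``holds here for general $r$,'' but that identity was a consequence of the tightness $|R_i|=m$ in the $r=1$ case. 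For example, with the length-$3$ code $c=(x_1+x_2,\,x_2+x_3,\,x_3+x_1)$ and $m=1$, receiver $1$ may decode via $R_1=\{2,3\}$, and then $H(\cb_{R_1}\mid x_1,x_2)=H(x_3)=1\neq 0$. So the intended ``bound each conditional term by $m$ or $0$'' scheme does not go through as stated; for $r>1$ the query sets genuinely overlap and the residual entropies are strictly positive, and the whole difficulty is to quantify this.

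What the paper actually does, and what is missing from your proposal, is a quantitative overlap bound. First, a symmetrization step (Lemma~\ref{lem:cyclic_symmetry}, proved by time-sharing the cyclic shifts of any given code) lets one assume $|R_1|=|R_2|=|R_3|=rm$ and equal pairwise intersections. Then, from $I(\xb_i;\cb_{R_i}\mid\xb_{i+1})=m$ one gets the correct general-$r$ replacement of your identity, namely $H(\cb_{R_i}\mid\xb_i,\xb_{i+1})\leq |R_i|-m$, which yields $H(\xb_{i+2}\mid\cb_{R_i},\xb_i,\xb_{i+1})\geq 2m-|R_i|$; combining this residual-uncertainty bound with decodability at receiver $i+2$ ($H(\xb_{i+2}\mid\cb_{R_{i+2}},\xb_i)=0$) gives $|R_{i+2}\setminus R_i|\geq 2m-|R_i|$, i.e.\ $|R_i\cap R_{i+2}|\leq 2(|R_i|-m)$, and inclusion--exclusion on $\ell\geq|R_1\cup R_2\cup R_3|$ then gives $\ell\geq 6m-3|R_i|$, i.e.\ $\beta\geq 6-3r$. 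None of these steps (the symmetrization, the $|R_i|-m$ slack bound, the translation into a set-theoretic bound on $|R_i\cap R_{i+2}|$, the inclusion--exclusion) appears in your proposal, so as written the converse for $1\leq r<4/3$ is unproven.
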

We will now first prove the achievability part which will provide an upper bound on $\beta_G^*(r)$, and then provide an information-theoretic converse to arrive at a lower bound. Theorem~\ref{thm:3cycle} will then follow immediately from these two bounds.

\subsection{Proof of achievability}

From Example~\ref{ex:3_cycle_achiev} in Section~\ref{sec:ais_cover} we know that for $r=4/3$, $\beta_G^*(r)=2$. 
Also, from Theorem~\ref{thm:frac_coloring}, we know that when $r=1$, $\beta_G^*(r)=\chi_f(\Gcpu)$. Since $G$ is a directed $3$-cycle, $\Gcpu$ is a complete graph on three vertices, i.e., an edge exists between every pair of vertices in $\Gcpu$. Hence, $\chi_f(\Gcpu)=3$, and therefore, $\beta_G^*(1)=3$.

Now since $\beta_G^*(r)$ is a convex function, considering the $(r,\beta_G^*)$ points $(1,3)$ and $(4/3,2)$, for any $\alpha \in (0,1)$, we have $\beta_G^*(\alpha + (1-\alpha)4/3) \leq 3\alpha + 2(1-\alpha)$. With $r=\alpha + (1-\alpha)4/3$, we therefore have
\begin{equation} \label{eq:3cycle_achiev_1}
\beta_G^*(r) \leq 6-3r, \text{ for } 1 \leq r \leq 4/3.
\end{equation} 
Further, since $\beta_G^*$ is a decreasing function, for any $r \geq 4/3$, $\beta_G^*(r) \leq \beta_G^*(4/3)=2$. Combining this with~\eqref{eq:3cycle_achiev_1} we have
\begin{equation*}
\beta_G^*(r) \leq \max\{6-3r,2\} \text{ for all } r \geq 1.
\end{equation*} 

\subsection{Proof of converse}

To prove the converse we first require the following general result. 
Let $G$ be any unicast index coding problem involving $N$ messages such that the cyclic permutation $\sigma$ that maps $i \in [N]$ to $i \mod N + 1$ is an automorphism of the side information graph $G$. We now show that we can assume without loss of generality that the index sets of the codeword symbols queried by the $N$ receivers satisfy certain symmetry properties. 

\begin{lemma} \label{lem:cyclic_symmetry}
If the cyclic permutation $\sigma$ is an automorphism of $G$ and if there exists an index code $(\Ef,\Df_1,\dots,\Df_N)$ for $G$ with broadcast rate $\beta$, then there exists an index code $(\Ef',\Df_1',\dots,\Df_N')$ for $G$ with rate $\beta$ such that the index sets of codeword symbols $R_1',\dots,R_N'$ queried by the receivers satisfy the following properties:
\begin{enumerate}[(i)]
\item $|R_1'|=|R_2'|\cdots=|R_N'|$; and 
\item $|R_1' \cap R_2'| = |R_2' \cap R_3'| = \cdots = |R_N' \cap R_1'|$.
\end{enumerate} 
\end{lemma}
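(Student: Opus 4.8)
The plan is to exploit the fact that $\sigma$ being an automorphism of $G$ lets us build $N$ ``rotated'' copies of the given index code, and then time-share (concatenate) them to obtain a single symmetric code. Concretely, let $\p{P}$ be the cyclic shift permutation matrix on $[N]$ as in the proof of Theorem~\ref{thm:cyclic_const}. For each $s \in \{0,1,\dots,N-1\}$ define a new index code by relabelling messages and receivers via $\sigma^s$: the encoder $\Ef^{(s)}$ first applies $\sigma^{-s}$ to the message indices, runs $\Ef$, and the $i^{\text{th}}$ receiver of the $s^{\text{th}}$ copy uses the decoder of receiver $\sigma^{-s}(i)$ of the original code. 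Because $\sigma$ is an automorphism, $K_{\sigma(i)} = \sigma(K_i)$, so receiver $i$ in the $s^{\text{th}}$ copy indeed has exactly the side information it needs, and this is a valid index code for $G$ with the same rate $\beta = \ell/m$, in which the index set of symbols queried by receiver $i$ is $R^{(s)}_i = R_{\sigma^{-s}(i)}$ (a subset of a fresh block of $\ell$ coordinates).

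Next I would form the ``stacked'' code $(\Ef',\Df_1',\dots,\Df_N')$ that runs all $N$ copies in parallel on $N$ independent instances of the message tuple: message length becomes $m' = Nm$, codeword length $\ell' = N\ell$, so the rate is unchanged, $\beta' = \ell'/m' = \beta$. Receiver $i$ of the stacked code must recover its $N$ message-instances, so it queries $R_i' = \bigsqcup_{s=0}^{N-1} R^{(s)}_i$ inside the corresponding blocks, i.e.\ the $s$-th block contributes the coordinates $R_{\sigma^{-s}(i)}$. Hence $|R_i'| = \sum_{s=0}^{N-1} |R_{\sigma^{-s}(i)}| = \sum_{t=0}^{N-1} |R_t|$, which is the same for every $i$; this gives property~(i). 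For property~(ii), since the blocks are disjoint, $|R_i' \cap R_{i+1}'| = \sum_{s=0}^{N-1} |R^{(s)}_i \cap R^{(s)}_{i+1}| = \sum_{s=0}^{N-1} |R_{\sigma^{-s}(i)} \cap R_{\sigma^{-s}(i+1)}|$, and as $s$ ranges over $\{0,\dots,N-1\}$ the pair $(\sigma^{-s}(i),\sigma^{-s}(i+1))$ ranges over all $N$ consecutive pairs $(t,t+1 \bmod N)$ exactly once (shifted by $i$), so the sum equals $\sum_{t=0}^{N-1} |R_t \cap R_{t+1 \bmod N}|$, independent of $i$. That yields property~(ii).

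A small bookkeeping point worth stating explicitly: after this construction we may still have a codeword symbol that is queried by nobody, but as noted in Section~\ref{sec:defn} we can delete such coordinates without affecting rate or locality, so the normalization $R_1' \cup \cdots \cup R_N' = [\ell']$ is preserved. I would close by remarking that the locality of the new code is $\max_i |R_i'|/m' = \bigl(\sum_t |R_t|\bigr)/(Nm) = \frac{1}{N}\sum_t r_t \leq \max_t r_t$, so the symmetrization does not increase locality either (indeed it can only help), which is exactly what is needed when applying this lemma inside the $3$-cycle converse.

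The main obstacle is purely notational rather than conceptual: one has to set up the relabelled codes carefully enough that it is transparent (a) that $\sigma^s$ maps the constraint ``receiver $\sigma^{-s}(i)$ knows $\xb_j$ for $j \in K_{\sigma^{-s}(i)}$'' onto ``receiver $i$ knows $\xb_j$ for $j \in K_i$'' — this is precisely the automorphism hypothesis $\sigma(K_i) = K_{\sigma(i)}$ — and (b) that the index-set of queried coordinates transforms as $R_i \mapsto R_{\sigma^{-s}(i)}$ with no off-by-one error in which block is which. Once the indexing is pinned down, both displayed equalities are immediate sums over a single orbit of $\sigma$, and no further estimates are required.
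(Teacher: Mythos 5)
Your proposal is correct and follows essentially the same route as the paper: build $N$ rotated copies of the code using powers of the automorphism $\sigma$ (the paper's indexing $R_{(i-n)_N}$ is exactly your $R_{\sigma^{-s}(i)}$), time-share each copy once on independent message instances, and observe that both $|R_i'|$ and $|R_i' \cap R_{i+1}'|$ become sums over the full $\sigma$-orbit, hence independent of $i$. Your added remark that the symmetrization does not increase locality is a harmless bonus not needed for the lemma as stated.
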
 
\begin{proof}
See Appendix~\ref{app:lem:cyclic_symmetry}.
\end{proof}

Let us now assume that $G$ is the directed $3$-cycle.
Consider any valid index coding scheme for $G$ with broadcast rate $\beta$ and locality $r$. Let the message length be $m$ and codelength be $\ell$. Using Lemma~\ref{lem:cyclic_symmetry}, we assume without loss in generality that the sets $R_1,R_2,R_3$ satisfy $|R_1|=|R_2|=|R_3|=rm$ and $|R_1 \cap R_2| = |R_2 \cap R_3|=|R_3 \cap R_1|$.

For the sake of brevity, we abuse the notation mildly by using $(i+1)$ to denote the receiver index $(i \mod 3 + 1)$, and similarly use $(i+2)$ in order to denote $(i+1) \mod 3 + 1$. With this notation, for $i=1,2,3$, the side information index set of the $i^{\text{th}}$ receiver is $K_i=\{i+1\}$. Assume, as usual, that the messages $\xb_1,\xb_2,\xb_3$ are independently and uniformly distributed in $\Ac^m$.

Now considering the $i^{\text{th}}$ receiver, we have $I(\xb_i;\cb_{R_i}|\xb_{i+1})=H(\xb_i)=m$. Expanding this term as a difference of conditional entropies, we have
$H(\cb_{R_i}|\xb_{i+1}) - H(\cb_{R_i}|\xb_i,\xb_{i+1}) = m$. Using  this with the upper bound $H(\cb_{R_i}|\xb_{i+1}) \leq H(\cb_{R_i}) \leq |R_i|$, we arrive at
\begin{equation*}
H(\cb_{R_i}|\xb_i,\xb_{i+1}) \leq |R_i| - m.
\end{equation*} 
Using the above inequality and the fact that $\cb_{R_i}$ is a deterministic function of all three messages $\xb_i,\xb_{i+1},\xb_{i+2}$, we have $I(\cb_{R_i};\xb_{i+2}|\xb_i,\xb_{i+1})=H(\cb_{R_i}|\xb_i,\xb_{i+1}) \leq |R_i|-m$. Hence,
\begin{align*}
H(\xb_{i+2}|\xb_i,\xb_{i+1}) - H(\xb_{i+2}|\cb_{R_i},\xb_i,\xb_{i+1}) \leq |R_i| - m.
\end{align*} 
Since $H(\xb_{i+2}|\xb_i,\xb_{i+1})=m$, we obtain the lower bound
\begin{equation} \label{eq:3cycle_conv_1}
H(\xb_{i+2}|\cb_{R_i},\xb_i,\xb_{i+1}) \geq 2m-|R_i|. 
\end{equation}  

Our objective now is to use the above inequality to obtain an upper bound on $|R_i \cap R_{i+2}|$, which can then be translated into a lower bound on $\ell$, and hence, a lower bound on $\beta$. To do so, observe that $\cb_{R_i}$ is composed of the following two sub-vectors $\cb_{R_i \cap R_{i+2}}$ and $\cb_{R_i \setminus R_{i+2}}$. Using~\eqref{eq:3cycle_conv_1}, we obtain
\begin{align*}
H(\xb_{i+2}|\cb_{R_i \cap R_{i+2}},\xb_i) & \geq H(\xb_{i+2}|\cb_{R_i},\xb_i,\xb_{i+1}) \geq 2m-|R_i|.
\end{align*} 
Using this inequality, and the relation $H(\xb_{i+2}|\cb_{R_{i+2}},\xb_i)=0$ (to satisfy the demands of the $(i+2)^{\text{th}}$ receiver), we obtain the following
\begin{align*}
|R_{i+2} \setminus R_i| &\geq H(\cb_{R_{i+2} \setminus R_i}) \\
&\geq I(x_{i+2};\cb_{R_{i+2} \setminus R_i}|\cb_{R_i \cap R_{i+2}},\xb_i) \\
&= H(\xb_{i+2}|\cb_{R_i \cap R_{i+2}},\xb_i) \\ &~~~~~~~~~~~~~~- H(\xb_{i+2}|\cb_{R_{i+2}\setminus R_i},\cb_{R_i \cap R_{i+2}},\xb_i) \\
&= H(\xb_{i+2}|\cb_{R_i \cap R_{i+2}},\xb_i) - H(\xb_{i+2}|\cb_{R_{i+2}},\xb_i) \\
&\geq 2m - |R_i|.
\end{align*} 

Since $|R_1|=|R_2|=|R_3|$, we now have
\begin{align*}
|R_i \cap R_{i+2}| &= |R_{i+2}| - |R_{i+2} \setminus R_i| \\
&\leq |R_i| - (2m - |R_i|) \\
&= 2\left(|R_i|-m \right).
\end{align*} 
Finally, using the fact that $|R_i \cap R_{i+2}|$ is independent of $i$,
\begin{align*}
\ell &= |R_1 \cup R_2 \cup R_3| \\
&\geq \sum_{j=1}^{3} |R_j| - \sum_{i=1}^{3} |R_j \cap R_{j+2}| \\
&= 3|R_i| - 3|R_i \cap R_{i+2}| \\
&\geq 3|R_i| - 3 \times 2(|R_i|-m) \\
&= 6m - 3|R_i|.
\end{align*} 
Dividing both sides by the message length $m$, and remembering that all the receivers have the same locality $r$, we have $\beta=\ell/m \geq 6-3r$. Thus we have 
\begin{align*}
\beta_G^*(r) \geq 6-3r \text{ for all } r \geq 1.
\end{align*} 
Further, $\betaopt(G)=2$, and hence, $\beta_G^*(r) \geq 2$ for all $r \geq 1$. Combining this with the above inequality we have the converse
\begin{equation*} 
\beta_G^*(r) \geq \max\{6-3r,2\} \text{ for all } r \geq 1.
\end{equation*}

\section{Conclusion and Discussion} \label{sec:conclusion}

\subsection*{Relation to other problems}

Locally decodable index codes are related to privacy-preserving broadcasting and the problem of sparse representation of sets of vectors.
In~\cite{KSCF_arXiv_17}, the authors consider scalar linear index codes with locality $r$ under a privacy preserving communication scenario using the terminology \emph{$r$-limited access schemes}. 
The motivation for using $r$-limited access schemes is to reduce the amount of information that any receiver can infer about the demands of other receivers in the network. 
This is achieved by restricting the knowledge of the scalar linear encoder matrix $\p{B}$ at any receiver to at the most $r$ columns instead of revealing the entire matrix. 
Thus each receiver knows the values of at the most $r$ columns of the encoder matrix $\p{B}$, and hence, is required to perform decoding by querying at the most $r$ codeword symbols which correspond to these columns. 

A scalar linear index code with locality $r$ is characterized by a valid encoder matrix $\p{B}$ with a corresponding fitting matrix $\p{A}$ such that any column of $\p{A}$ is some linear combination of at the most $r$ columns of $\p{B}$. Thus the columns of $\p{B}$ serve as an overcomplete basis for a sparse representation of the columns of $\p{A}$.

\subsection*{Conclusion}

We introduced the problem of designing index codes that are locally decodable and have identified several techniques to construct such codes. We have also identified the optimum broadcast rate corresponding to unit locality for any unicast index coding problem, and the complete locality-rate trade-off curve for the problem represented by a directed $3$-cycle. Stronger achievability and/or converse results may be required to gain further insights into the locality-rate trade-off of a general index coding problem.



\section*{Acknowledgment}

The first author thanks Prof.\ B.\ Sundar\ Rajan for discussions regarding the topic of this paper.

\appendices

\section{Proof of Lemma~\ref{lem:convexity}} \label{app:lem:convexity}

The non-increasing property of $\beta_G^*$ follows immediately from its definition. We will use time-sharing to prove convexity. Assume $r_1,r_2 \geq 1$ and let $\epsilon > 0$. For each $j=1,2$, there exists an index code with broadcast rate $\beta_j \leq \beta_G^*(r_j) + \epsilon$ and locality at the most $r_j$. We will denote the blocklength of this code by $\ell_j$, message length by $m_j$ and the subsets of the indices used by the $N$ receivers as $R_{1,j},\dots,R_{N,j}$, where $j=1,2$. For some choice of non-negative integers $k_1$ and $k_2$, consider a time-sharing scheme where the first index code is used $k_1m_2$ times, and the second index code is used $k_2m_1$ times. For this composite scheme, the overall message length is $m=k_1m_2m_1 + k_2m_1m_2 = m_1m_2(k_1+k_2)$. The blocklength is $\ell=k_1m_2\ell_1 + k_2m_1\ell_2$, and the number of codeword symbols utilized by the $i^{\text{th}}$ receiver to decode its desired message is $|R_i|=k_1m_2|R_{i,1}| + k_2m_1|R_{i,2}|$. The locality of this time-sharing system can be upper bounded as
\begin{align*}
\max_i \frac{|R_i|}{m} &= \max_i \,\, \frac{k_1|R_{i,1}|}{(k_1+k_2)m_1} + \frac{k_2|R_{i,2}|}{(k_1+k_2)m_2} \\
&\leq \frac{k_1}{k_1+k_2} r_1 + \frac{k_2}{k_1+k_2}r_2.
\end{align*} 
Similarly, the broadcast rate of this time-sharing scheme can be shown to be equal to $k_1\beta_1/(k_1+k_2) + k_2\beta_2/(k_1+k_2)$, which is upper bounded by
\begin{align*}
\frac{k_1}{k_1+k_2}\beta_G^*(r_1) + \frac{k_2}{k_1+k_2}\beta_G^*(r_2) + \epsilon.
\end{align*} 
Denoting $k_1r_1/(k_1+k_2) + k_2r_2/(k_1+k_2)$ by $r$, and by letting $\epsilon \to 0$, we observe that
\begin{equation*}
\beta_G^*(r) \leq \frac{k_1}{k_1+k_2}\beta_G^*(r_1) + \frac{k_2}{k_1+k_2}\beta_G^*(r_2).
\end{equation*}
Convexity follows by approximating any real number in the interval $(0,1)$ by the rational number $k_1/(k_1+k_2)$ to any desired accuracy by using sufficiently large $k_1$ and $k_2$. 

\section{Fractional Coloring Solution has Unit Locality} \label{app:coloring_achievability}

We will now recall the fractional coloring solution~\cite{BKL_arxiv_10} to an index coding problem $G$ and observe that $r=1$ for this scheme.
Let $C_1,\dots,C_N \subset \{1,\dots,a\}$ be an $a:b$ coloring of $\Gcpu$ such that $\chi_f(\Gcpu)=a/b$. Set codeword length \mbox{$\ell=a$} and message length \mbox{$m=b$}. Denote the components of the message vectors $\xb_i \in \Ac^m$ using the variables $w_{i,t} \in \Ac$ as follows: 
$\xb_i = \left( w_{i,t}, t \in C_i \right)$, i.e., one symbol $w_{i,t}$ corresponding to each color $t$ in the set $C_i$.
Endow the set $\Ac$ with any abelian group structure $(\Ac,+)$.
The symbols of the codeword $\cb=(c_1,\dots,c_\ell)$ are generated as 
$c_t = \sum_{i:\, t \in C_i} w_{i,t}$, for $t \in [\ell]$.

Decoding at the receiver $(i,K_i)$ can be performed as follows. Note that $\xb_i$ is composed of all symbols $w_{i,t}$ such that $t \in C_i$. In order to decode $w_{i,t}$, the receiver retrieves the code symbol $c_t$ which is related to $w_{i,t}$ as 
\begin{equation} \label{eq:frac_color_ach}
 c_t = w_{i,t} + \sum_{\substack{j \neq i \\ j: \, t \in C_j}} w_{j,t}.
\end{equation} 
For any choice of the index $j$ in the summation term above, we have $i \neq j$ and $t \in C_i \cap C_j$. Since $C_1,\dots,C_N$ is a coloring of $\Gcpu$ and $C_i \cap C_j \neq \varnothing$, we deduce that $\{i,j\} \notin \Ecpu$, or equivalently, $j \in K_i$. Hence, for each $j \neq i$ such that $t \in C_j$, the receiver $(i,K_i)$ knows the value of $w_{j,t}$, and thus, can recover $w_{i,t}$ from $c_t$ using~\eqref{eq:frac_color_ach}. Using a similar procedure $(i,K_i)$ can decode all the $m$ symbols in $\xb_i$ from the $m$ coded symbols $(c_t,t \in C_i)$. This decoding method uses $R_i=C_i$ for all $i \in [N]$ and has locality $r=1$.

\section{Proof of Lemma~\ref{lem:cyclic_symmetry}} \label{app:lem:cyclic_symmetry}

Since $\sigma$ is an automorphism of $G$, so is $\sigma^n$ for any $n \in [N]$. Note that the group $\{\sigma,\sigma^2,\dots,\sigma^{N}=1\}$ acts transitively on the vertex set of $G$.
Let $(\Ef,\Df_1,\dots,\Df_N)$ be an index code for $G$ with rate $\beta$ and receiver localities $r_1,\dots,r_N$. We will consider $N$ index coding schemes $(\Ef^{(n)},\Df_1^{(n)},\dots,\Df_N^{(n)})$, $n \in [N]$, each of which is derived from $(\Ef,\Df_1,\dots,\Df_N)$ by permuting the roles of the messages $\xb_1,\dots,\xb_N$. Specifically, the $n^{\text{th}}$ encoder $\Ef^{(n)}$ is the encoder $\Ef$ applied to the $n^{\text{th}}$ left cyclic shift of the message tuple $\xb_1,\dots,\xb_N$, i.e.,
\begin{align*}
\Ef^{(n)}\left( \, (\xb_1,\dots,\xb_N) \,\right) &= \Ef(\, (\xb_{n+1},\xb_{n+2},\dots,\xb_N,\xb_1,\dots,\xb_{n}) \,) \\
&= \Ef\left( \, (\xb_{\sigma^{n}(1)},\dots,\xb_{\sigma^{n}(N)}) \, \right).
\end{align*}
For any $i \in [N]$, in the above expression of $\Ef^{(n)}$, the message $\xb_i$ is the $(i-n)_N^{\text{th}}$ argument of $\Ef$ where $(i-n)_N=(i-n)$ if $(i-n) \geq 1$ and $(i-n)_N = i-n+N$ otherwise. Hence, the encoding function $\Ef^{(n)}$ operates on the message $\xb_i$ in the same manner as$\Ef$ operates on the message $\xb_{(i-n)_N}$. Using the fact that $\sigma$ is an automorphism of $G$, it is easy to see that, when the $n^{\text{th}}$ code is used, the $i^{\text{th}}$ receiver can decode $\xb_i$ as $\Df_{(i-n)_N}(\cb_{R_{(i-n)_N}},\xb_{K_i})$. Thus, the codeword symbols queried by the $i^{\text{th}}$ receiver in the $n^{\text{th}}$ code is $|R_{(i-n)_N}| = m\,r_{(i-n)_N}$, where $m$ is the message length.

Now consider a time sharing scheme that utilizes each of the $N$ encoders $\Ef^{(1)},\dots,\Ef^{(N)}$ exactly once. The overall message length for this scheme is $mN$, the broadcast rate is $\beta$, and the number of codeword symbols queried by the $i^{\text{th}}$ receiver is 
\begin{equation*}
|R_i'| = \sum_{n \in [N] }|R_{(i-n)_N}| = m \sum_{ n \in [N]} r_n.
\end{equation*}
Similarly, for any $i \in [N]$ and $j = i \mod N + 1$, we have
\begin{align*}
|R_i' \cap R_j'| &= \sum_{n \in [N]} |R_{(i-n)_N} \cap R_{(i-n+1)_N}| \\
                 &= \sum_{n \in [N]} |R_{n} \cap R_{(n+1)_N}|,
\end{align*}  
which is independent of $i$.


\end{document}